\newtheorem{lemma}{Lemma}
\newtheorem{theorem}{Theorem}
\newtheorem{remark}{Remark}
\begin{document}
\title{CovertAuth: Joint Covert Communication and Authentication in MmWave Systems}

\author{Yulin~Teng, Keshuang~Han,
Pinchang~Zhang,~\IEEEmembership{Member,~IEEE},
Xiaohong Jiang,~\IEEEmembership{Senior Member,~IEEE},\\
Yulong Shen,~\IEEEmembership{Member,~IEEE},
and Fu~Xiao,~\IEEEmembership{Senior Member,~IEEE}
\thanks{Y. Teng, K. Han, and P. Zhang are with the School of Computer Science, Nanjing University of Posts and Telecommunications, Nanjing, Jiangsu, 210023, China (email:\{2023040511, 1023041139, zpc\}@njupt.edu.cn). They are also with the State Key Laboratory of Integrated Services
Networks (Xidian University).}
\thanks{X. Jiang is with the School of Systems Information Science, Future University Hakodate, Hakodate, 041-8655, Japan (e-mail: jiang@fun.ac.jp).}
\thanks{Y. Shen is with the School of Computer Science and Technology, Xidian
University, Xi’an 710071, China (e-mail: ylshen@mail.xidian.edu.cn).}
\thanks{F. Xiao is with the School of Computer Science,
Nanjing University of Posts and Telecommunications, Nanjing, Jiangsu,
210023, China (e-mail: xiaof@njupt.edu.cn).}
}

\markboth{Journal of \LaTeX\ Class Files,~Vol.~14, No.~8, August~2021}%
{Shell \MakeLowercase{\textit{et al.}}: A Sample Article Using IEEEtran.cls for IEEE Journals}

\maketitle
\begin{abstract}
Beam alignment (BA) is a crucial process in millimeter-wave (mmWave) communications, enabling precise directional transmission and efficient link establishment. However, due to characteristics like omnidirectional exposure and the broadcast nature of the BA phase, it is particularly vulnerable to eavesdropping and identity impersonation attacks. To this end, this paper proposes a novel secure framework named CovertAuth, designed to enhance the security of the BA phase against such attacks. 
In particular, to combat eavesdropping attacks, the closed-form expressions of successful BA probability and covert transmission rate are first derived. Then, a covert communication problem aimed at jointly optimizing beam training budget and transmission power is formulated to maximize covert communication rate, subject to the covertness requirement. 
An alternating optimization algorithm combined with successive convex approximation is employed to iteratively achieve optimal results.
To combat impersonation attacks, the mutual coupling effect of antenna array impairments is explored as a device feature to design a weighted-sum energy detector-based physical layer authentication scheme. Moreover, theoretical models for authentication metrics like detection and false alarm probabilities are also provided to conduct performance analysis. Based on these models, an optimization problem is constructed to determine the optimal weight value that maximizes authentication accuracy. 
Finally, simulation results demonstrate that CovertAuth presents improved detection accuracy under the same covertness requirement compared to existing works. 

\end{abstract}
\begin{IEEEkeywords}
 Physical layer authentication, covert transmission, antenna array impairment, mmWave beam alignment.
\end{IEEEkeywords}  
\vspace{-1em}
\section{Introduction}
\subsection{Background and Motivation}
\IEEEPARstart{B}{enefit} from the abundant spectrum resources and massive antenna arrays, millimeter-wave (mmWave) communication can exploit the highly directional beamforming technology to acquire ultra-high data transmission rates and extremely low latency \cite{DBLP:journals/comsur/TanLGWPZZL24}. Beam alignment (BA) is a critical prerequisite for the effective implementation of beamforming in mmWave communication. Accurate BA can significantly improve signal reception quality, compensate for high path loss, and maximize beamforming gain.
However, the inherent characteristics of the BA process such as omnidirectional exposure, high-probability line-of-sight channel, and broadcast nature of training sequences also make it particularly vulnerable to identity impersonation and eavesdropping attacks \cite{DBLP:conf/wisec/SteinmetzerYH18}. 
Specifically, omnidirectional or quasi-omnidirectional scanning is generally used to find the optimal transmission path during the BA phase \cite{DBLP:journals/twc/YangZTS24}, inadvertently increasing exposure to potential adversaries. By transmitting deceptive alignment signals that mimic legitimate ones, adversaries can mislead the base station into aligning their beams with a malicious source \cite{DBLP:conf/mobicom/SteinmetzerAAH018}.
Furthermore, the broadcast nature of beam training sequences during the BA stage allows adversaries to readily capture and analyze these sequences to infer beam directions and the underlying signal frame structure \cite{DBLP:journals/tifs/QiuCZ23}. 
Once the beam direction is grasped, adversaries can position themselves strategically to monitor the communication process and even locate the location of the transmitter. 
Consequently, it is essential to design an effective and reliable defense mechanism for the BA stage to secure the establishment of the mmWave communication link. 

Recently, physical layer security (PLS) technique has been considered a cost-effective approach to combat impersonation and eavesdropping attacks with reduced resource demands and higher architecture compatibility \cite{DBLP:journals/tmc/XieZZTLN24}, \cite{DBLP:journals/ton/YuYL23}. 
On the one hand, PLS for authentication utilizes inherent wireless channel characteristics or device-specific hardware fingerprints to validate user/device legitimacy, \cite{DBLP:journals/ton/XieTHL21}. On the other hand, PLS exploits the randomness of wireless channels such as fading, noise, and interference to improve communication secrecy \cite{DBLP:journals/ton/LiSKWLLZ23}. 
However, in certain critical scenarios like military combat and location tracking service, merely protecting content confidentiality is far insufficient. It is also imperative to conceal the existence of transmission activities to effectively mitigate potential threats.
Physical layer covert communication is an emerging PLS technique that allows the transmitter to discreetly convey information to the intended receiver at a specific rate, remaining undetected by adversaries and fundamentally preventing eavesdropping \cite{DBLP:journals/pieee/JiangWCS24}.
Therefore, it is expected that physical layer authentication combined with covert communication techniques can provide novel insights for designing secure mechanisms in the BA phase of mmWave systems.

Several recent studies have focused on deploying PLS methods in mmWave systems to address identity-based impersonation \cite{DBLP:conf/infocom/WangJWLZ20}, \cite{DBLP:journals/tifs/NosouhiSGD22}, \cite{DBLP:journals/tifs/LuLSL23}, \cite{DBLP:journals/tifs/TengZCJX24}, or eavesdropping attacks \cite{DBLP:journals/tifs/ZhangLZJX22},  \cite{DBLP:journals/twc/JamaliM22}, \cite{DBLP:journals/twc/WangLN22}, \cite{DBLP:journals/twc/XiaoHLWSWY24}. For the physical layer authentication approaches,  Wang \textit{et. al} exploit the signal-to-ratio (SNR) trace in the sector-level sweep process to design an efficient spoofing attack detection scheme for IEEE 802.11ad mmWave networks \cite{DBLP:conf/infocom/WangJWLZ20}. Following this line, the work in \cite{DBLP:journals/tifs/NosouhiSGD22}  investigates the uniqueness of the mmWave beam pattern feature and utilizes a deep autoencoder for illegal device detection with 98.6\% accuracy. Later, Lu \textit{et. al} leverage channel phase responses to mitigate performance losses due to imperfect channel correlation, demonstrating superior authentication performance even under low SNR conditions \cite{DBLP:journals/tifs/LuLSL23}. For the covert communication schemes, the authors of \cite{DBLP:journals/twc/JamaliM22} employ a dual-beam mmWave transmitter to concurrently transmit desired signal to the legitimate receiver and a jamming signal to interfere with the adversary's detection capability. In \cite{DBLP:journals/twc/WangLN22}, a full-duplex receiver is considered to 
generate jamming signals with a time-varying power for masking the presence of the legitimate transmitter. Xiao \textit{et. al} explore the use of simultaneously transmitting and reflecting reconfigurable intelligent surfaces to support covert communication in a mmWave system \cite{DBLP:journals/twc/XiaoHLWSWY24}. However, the above methods are primarily designed to address one specific security threat, and their reliability may degrade when facing adversaries possessing multiple attack capabilities.

\begin{table*}
\vspace{-2em}
\centering
\caption{Comparison with Existing Physical Layer Security Works}
\label{comparison}
\begin{threeparttable}
\begin{tblr}{
	  width = \linewidth,
	  colspec = {Q[c]Q[290]Q[290]Q[c]Q[42]Q[42]Q[580]},
	  cell{1}{1} = {r=2}{},
	  cell{1}{2} = {r=2}{},
	  cell{1}{3} = {r=2}{},
	  cell{1}{4} = {c=3}{0.126\linewidth},
	  cell{1}{7} = {r=2}{},
	  vlines,
	  hline{1,11} = {-}{0.08em},
	  hline{2} = {4-6}{},
	  hline{3-10} = {-}{},
	}
\textbf{Works}& \textbf{~~~~~~System Model} & \textbf{~~~Performance Metrics} & \textbf{Security Guarantee} & & & \textbf{~~~~~~~~~~~~~~~~~~Main Contributions} \\
&      &     & Co.\tnote{1}   & Au. &  Pr.  &   \\
{}\cite{DBLP:conf/infocom/WangJWLZ20} & Data transmission scenario of mmWave MIMO system  & Detection probability                     & \ding{55}                       & $\checkmark$   & {
	  \ding{55}} & An authentication scheme exploiting the
	  SNR trace during the sector-level sweep process to identify the transmitter. \\
{}\cite{DBLP:journals/tifs/TengZCJX24} & Data transmission scenario of
	  UAV-Ground system                     & {
	  Detection and false alarm probabilities
	  }                   & \ding{55}                       & $\checkmark$   & \ding{55}            & Authentication framework based on
	  Laplace distribution modeling of channel sparsity.                                                                                                \\
{}\cite{DBLP:journals/tifs/ZhangLZJX22} & Beam training and data transmission
	  scenarios of mmWave MIMO system & Covert communication rate                                                         & $\checkmark$                       & \ding{55}   & $\checkmark$            & Covert beam training strategy for
	  multi-user communication to optimize covert throughput.                                                                                           \\
{}\cite{DBLP:journals/twc/WangLN22} & Data transmission scenario of
	  mmWave full duplex system           & Detection error probability and covert
	  communication rate                       & $\checkmark$                       & \ding{55}   & $\checkmark$            & Full-duplex covert mmWave communication
	  with joint optimization of beamforming, transmit power, and jamming to
	  maximize covert rate. \\
{}\cite{DBLP:journals/jsac/XieZCT22} & Data transmission scenario of NOMA system
                & ~Detection and outrage probabilities
 & $\checkmark$                       & \ding{55}   & $\checkmark$            & Channel-based key generation for information confidentiality and a hybrid authentication leveraging channel characteristics and tag-based verification.\\
Ours & Beam training scenario of mmWave MIMO system
                & Detection, false alarm and  beam alignment probabilities, covert communication rate
 & $\checkmark$   & $\checkmark$   & $\checkmark$            & CovertAuth utilizes mutual coupling for identity validation and integrates it with a covert communication optimization framework to combat eavesdropping and impersonation attacks. 
\end{tblr}
\begin{tablenotes}    
    \footnotesize               
    \item[1] Co.: confidentiality,    ~~~Au.: authenticity,     ~~~Pr.: privacy.              
\end{tablenotes}            
\end{threeparttable}     
\vspace{-2em}
\end{table*}

Given that few studies simultaneously consider impersonation and eavesdropping attacks in mmWave systems, we broaden our scope to include other wireless communication systems like non-orthogonal multiple access systems \cite{DBLP:journals/jsac/XieZCT22} and location service systems \cite{DBLP:journals/ton/LiZCCXL24}.  For instance, Xie \textit{et. al} utilize the channel response to generate the secret keys for message encryption and a hybrid authentication scheme is also designed by using the tag and channel responses to achieve identity validation \cite{DBLP:journals/jsac/XieZCT22}. On this basis, Li \textit{et. al} consider a cooperative attack scenario and propose an effective privacy-preserving physical layer authentication scheme to simultaneously protect privacy data and identity legitimacy \cite{DBLP:journals/ton/LiZCCXL24}. 
Nevertheless, the characteristics of the mmWave systems such as sparse propagation environment and rapidly varying channels degrade the feature distinguishability of the adopted channel response fingerprint. Also, these schemes only provide content protection without fundamentally mitigating eavesdropping threats and thus cannot be directly deployed in the mmWave BA stage. To outline the difference between this work and main related works \cite{DBLP:conf/infocom/WangJWLZ20}, \cite{DBLP:journals/tifs/TengZCJX24}, \cite{DBLP:journals/tifs/ZhangLZJX22}, \cite{DBLP:journals/twc/WangLN22}, \cite{DBLP:journals/jsac/XieZCT22}, we summarize in Table \ref{comparison} a comparison among these works in terms of system model, performance metrics, security guarantee, and main contributions.
\vspace{-1.2em}
\subsection{Limitations of Prior Art}
Although significant progress has been made in the design of PLS for mmWave systems by the works in \cite{DBLP:conf/infocom/WangJWLZ20}, \cite{DBLP:journals/tifs/NosouhiSGD22}, \cite{DBLP:journals/tifs/LuLSL23}, \cite{DBLP:journals/tifs/TengZCJX24},  \cite{DBLP:journals/tifs/ZhangLZJX22},  \cite{DBLP:journals/twc/JamaliM22}, \cite{DBLP:journals/twc/WangLN22}, \cite{DBLP:journals/twc/XiaoHLWSWY24}, several critical limitations remain unresolved.
First, these schemes tend to focus solely on either identity authentication or privacy protection, lacking a unified security mechanism that can simultaneously address both aspects in mmWave systems. 
Also, they mainly focus on established mmWave communication links, without considering the identity validation and covertness requirement during the BA phase. 
Second, we observe that the beam pattern is generally impacted by hardware impairments in antenna arrays (e.g., mutual coupling (MC) effects) \cite{DBLP:journals/tap/Schmid13}. It remains unclear whether such fine-grained imperfection could be leveraged to enhance authentication performance in mmWave systems, suggesting an opportunity for further research in incorporating MC effects into the secure authentication scheme design. Also, the current methods assign fixed weights to the adopted identity features and thus fail to dynamically adjust feature weights under different scenarios for improved authentication performance.
Finally, current security schemes primarily conduct theoretical analysis of only authentication or only covert communication performance, making it difficult to conduct a comprehensive performance analysis and to establish predictable guarantees. 
\vspace{-1.2em}
\subsection{Technical Challenges and Our Solutions}
Several technical challenges arouse when addressing the limitations of existing works. To address the limitation that existing security schemes consider only a single type of security concerns, we design a new security scheme tailored for the critical BA phase to simultaneously combat both the impersonation and eavesdropping attacks. The main challenge here lies in how to determine the proper signal transmission power and beam signal budget to meet the two conflicting security requirements in terms of authentication accuracy and covert transmission rate. To this end, we formulate the settings of these two parameters as a non-convex optimization problem and devise a corresponding iterative algorithm to solve this problem.
To address the limitation that existing authentication schemes lack of an adaptive weight setting for features, we design an adaptive authentication mechanism for the efficient device identity validation in the BA phase. The challenge therein is how to determine the optimal weight settings for all the received beam training signals so as to achieve the best authentication accuracy. To this end, we formulate the optimal weight setting as an optimization problem, and apply the sequential quadratic programming algorithm to tackle this highly complex and non-convex optimization problem.
To address the limitation that available theoretical frameworks are suitable for the analysis of only authentication performance or only covert communication performance, we develop a complete theoretical framework for the performance analysis and optimization of the proposed CovertAuth scheme, which involves the joint optimization of signal transmission power and beam signal budget, feature weight optimization, covert communication performance modeling, and authentication performance modeling.
\vspace{-1em}

\subsection{Contributions and Results}
Based on the above observations, we propose CovertAuth, a novel PLS framework suitable for the mmWave BA stage that simultaneously combats eavesdropping and identity-based impersonation attacks. The main contributions of this paper are as follows:
\vspace{-0em}
\begin{itemize}
    \item \textbf{Design of Covert Communication for the BA Phase.} Based on a predefined beam codebook, we first develop theoretical models for the BA performance metric and covert transmission rate. Under the imperfect channel state information (CSI), the covert communication problem is formulated to design the beam training budget and transmission power for maximizing the covert communication rate, subject to the required covertness constraint. The alternating optimization algorithm with successive convex approximation is adopted to iteratively find the optimal solution.
    \item  \textbf{MC Effect-based Authentication Scheme.} We first analyze the feature feasibility of the MC effect and explore its impact on the beam pattern. Considering the omnidirectional scanning in the BA phase, we assign different weight values to the received BA signals (involving the beam pattern impacted by the MC feature) and then design an adaptive weight-based physical layer authentication scheme.
    \item \textbf{Theoretical Analysis and Optimal Weight Determination.} We also provide theoretical modeling for the authentication performance metrics like detection and false alarm probabilities. To enhance the authentication performance of CovertAuth, a quantitative relationship is established between weight and performance metrics using a sum-weighted approximation technique. With this relationship, we formulate an optimization problem to determine the optimal weight allocation that maximizes authentication accuracy under a false alarm constraint.
    \item \textbf{Performance Validation.} Extensive simulations are conducted to validate the correctness of the theoretical models and evaluate the robustness of CovertAuth against impersonation and eavesdropping attacks. This study demonstrates the potential of the new CovertAuth scheme to simultaneously combat both identity impersonation and eavesdropping attacks, offering an attractive security solution for mmWave BA stage. Also, the proposed theoretical framework helps to reveal the inherent interplay between the covert communication requirement and authentication performance, and enables a flexible trade-off between such two security concerns to be initialized to adapt to varying application scenarios.
\end{itemize}

\vspace{-0em}

The remainder of this article is organized as follows. Section \ref{system model} illustrates the problem formulation and system model. The overview of CovertAuth is introduced in Section \ref{design_covertauth} and Section \ref{theoretical_analysis} presents the performance modeling and optimization analysis of CovertAuth. Numerical results are presented in Section \ref{num results} and Section \ref{Discussions} further discusses the effects of side-lobe on CovertAuth. Section \ref{related_work} reviews the related work, and concluding remarks are provided in Section \ref{conclusion}.

\vspace{-1em}
\section{Problem Formulation and System Model}\label{system model}
\begin{figure}
    \centering
    \includegraphics[width=0.9\linewidth]{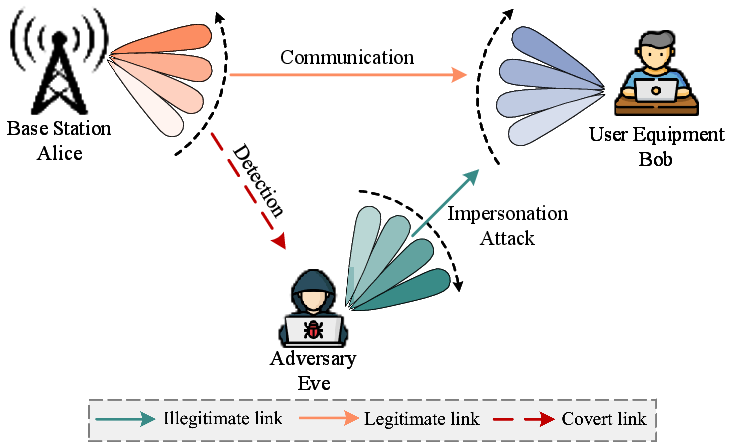}
    \setlength{\abovecaptionskip}{-0.3em}
    \caption{System model.}
    \label{fig_system}
    \vspace{-1.6em}
\end{figure}
\subsection{Problem Formulation}
As shown in Fig.~\ref{fig_system}, we consider a typical three-entity security model consisting of one legal base station Alice with $N_t$ antennas, a user equipment Bob with $N_r$ antennas, and a malicious adversary Eve with  $N_t$ antennas. The uniform linear array (ULA) is allocated with half-wavelength antenna spacing. 
To establish the mmWave downlink communication between Alice and Bob, Alice initially transmits reference signals with different transmit antenna patterns to identify the useful spatial directions in the channel environment. Then, Bob sweeps the beam codebook with a variety of receiving beams to search for the optimal beam pair with the maximal received power and then reports the selected beam pair to Alice to accomplish the beam training \cite{DBLP:journals/twc/YangZTS24}.  At the same time, the adversary Eve attempts to launch both passive and active attacks during the BA stage for confidential information recording and unauthorized access. The considered system model, although simple, can find applications in various 5G NR or IEEE 802.11ad-based systems, like vehicle-to-everything (V2X) and unmanned aerial vehicle (UAV) networks \cite{DBLP:journals/twc/ZhengWLJWWY22}, \cite{DBLP:journals/twc/HuangHYS25}. These scenarios generally involve critical security requirements for sensitive data protection and identity verification.

\vspace{-0.8em}
\subsection{Threat Model}
To evaluate the resilience of the proposed defensive framework CovertAuth, a powerful adversary is assumed to perform passive surveillance or active manipulation during the mmWave communication link establishment. For the passive surveillance operation, the adversary attempts to detect the location of the transmitter and wiretap the communication channel for the sensitive information. For active attacks, the adversary intends to cheat the receiver to pass the authentication by impersonating the identity of the legal transmitter.    
The prototypical offensive tactics that an adversary may exploit in the BA stage are as follows.
\begin{itemize}
    \item \textbf{Eavesdropping Attack.} In such a passive attack, Eve attempts to detect the position of the legal transmitter and stealthily record the transmitting signals. Then, She analyzes the captured signals for confidential information such as optimal beam pair selection and beampattern direction.
    \item \textbf{Identity-based Impersonation Attack.} We assume that Eve has prior knowledge about the adopted authentication scheme and communication protocol. Then, the adversary employs a device with the same model and hardware configuration to communicate with Bob by imitating the identity of Alice (e.g., the media access control address) for unauthorized access.
\end{itemize}
\vspace{-0.6em}
\subsection{Mutual Coupling Effect Model}
\textbf{Mathematical Model of MC effects.} The mutual coupling (MC) effects, as a kind of hardware imperfection in the antenna array, denote the electromagnetic interaction between antenna elements and are significantly influenced by the physical arrangement, antenna polarization, and manufacturing materials. For the considered ULA array, a symmetric Toeplitz matrix is generally exploited to model the MC structure \cite{DBLP:journals/tsp/AubryMLR23}.  If we use $M$ to denote the number of non-zero MC coefficients, then the MC matrix of the transmit array $\mathbf{C}_t\in\mathbb{C}^{N_t\times N_t}$ is modeled by
\begin{align}\label{MC_matrix}
    \mathbf{C}_t=\rm{Toeplitz}\{\mathbf{c}_t,0,\cdots,0\}.
\end{align}

The MC vector $\mathbf{c}_t=[1,c_{t,1},\cdots,c_{t,M-1}]^{T}\in\mathbb{C}^{M}$ and $c_{t,m}$ is the $m$-th non-zero element with $m=1,2,\cdots,M-1$. Following a similar way, we can obtain the MC matrix corresponding to the receiving antenna array $\mathbf{C_r}^{N_r\times N_r}$. Similar to previous work in \cite{DBLP:journals/tsp/ChenCCW19}, a complex Gaussian distribution with mean vector $\bar{\mathbf{c}}_t=[\bar{c}_{t,0},\bar{c}_{t,1},\cdots,\bar{c}_{t,M-1}]^T$  and covariance matrix $\sigma^2_{c}\mathbf{I}$ is adopted to generate the MC coefficient vector $\mathbf{c}_{t}$, that is, $\mathbf{c}_{t}\thicksim\mathcal{CN}(\bar{\mathbf{c}}_{t},\sigma^2_{c}\mathbf{I})$. In specific, $\bar{c}_{t,m}=\frac{\varsigma}{d_m}$ and $\sigma^2_{c}=\varsigma$, where $d_m$ denotes the normalized antenna spacing distance between the first and the $m$-th element and $\varsigma$ represents the combined effects of antenna material properties and manufacturing tolerance, respectively.

\textbf{Feature Feasibility Analysis of MC Effects.} We analyze the feasibility of the MC effects in terms of feature uniqueness and stability. On one hand, the feature uniqueness of MC is attributed to the fact that several factors like array geometry, array element tolerance, and manufacturing material properties collectively result in a unique coupling pattern, which is challenging to replicate across different antenna arrays. On the other hand, the feature stability of MC is rooted in the fixed physical structure and high-performance materials (e.g., high temperature and wear resistance) of the antenna array. The former minimizes mechanical deformation, while the latter preserves structural integrity and mitigates the impact of temperature variations on MC effects. The above characteristics present the authentication potential of MC as the hardware fingerprint.
\begin{figure}[htbp]
\vspace{-1em}
		\centering
        \hspace{-0.8em}
        \subfigure[Device 1 at different time.]
        {\includegraphics[width=0.51\linewidth]{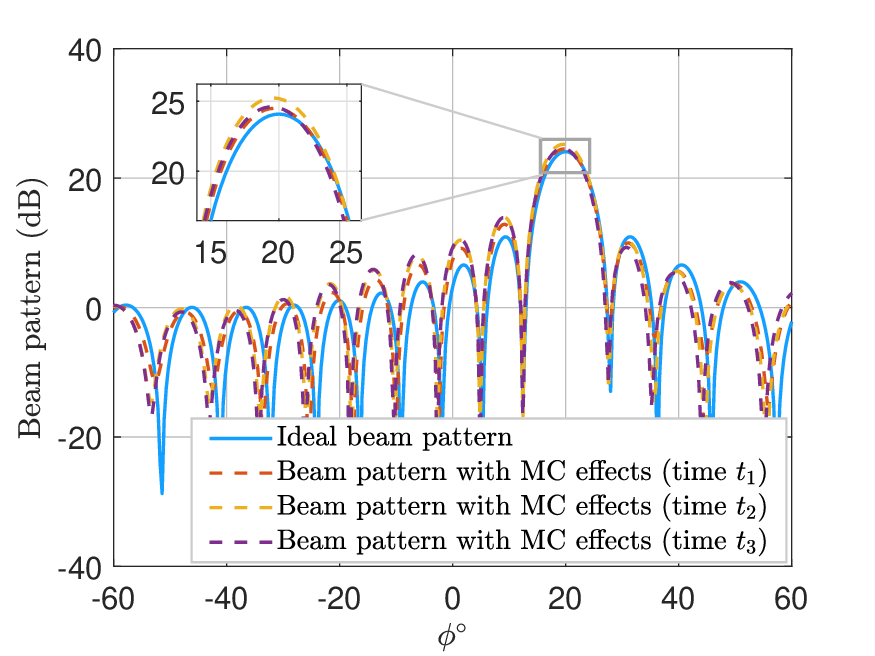}
        \label{MC_1}}
        \hspace{-1.6em}
        \subfigure[Device 2 at different time.]
        {\includegraphics[width=0.51\linewidth]{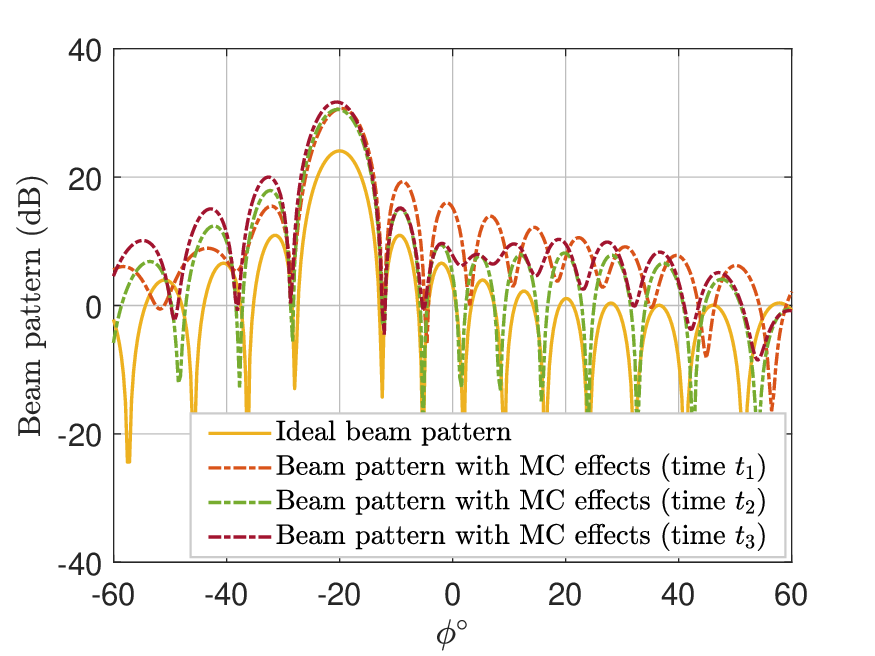}
        \label{MC_2}}
        \caption{Impacts of MC on transmit beam pattern using two devices with different antenna arrays.}
        \label{MC_effects}
\end{figure}

\textbf{Effects of MC on Transmit Beam Pattern.} The transmit beam pattern represents the radiation characteristics of an antenna array, depicting how signal power is distributed across different directions in space. For an ideal beam pattern with no hardware impairment, the response $B_{\rm{ideal}}$ is written as 
\begin{align}
    B_{\rm{ideal}}(\phi)= \mathbf{w}^{H}\mathbf{a}_t(\phi),
\end{align}
where $\mathbf{w}\in\mathbb{C}^{N_t}$ denotes the weight vector, representing the amplitude and phase adjustments applied to each antenna element. $\mathbf{a}_t(\phi)$  is the transmit array manifold vector, representing the phase response of the transmit array at the angle $\phi$.
In practice, however, hardware imperfections such as MC effects between antenna elements alter the transmit beam pattern. Then, the actual beam pattern becomes:
\begin{align}
    B_{\rm{MC}}(\phi)= \mathbf{w}^{H}\mathbf{C}_t\mathbf{a}_t(\phi).
\end{align}

To further present the feasibility of MC as a unique antenna array fingerprint, we investigate its impact on the array radiation pattern.
In this regard, we employ two transmitting devices with different antenna arrays and observe how the MC from two different antenna arrays affects the beam pattern under different time instants. The comparison results of the ideal and actual beam patterns are depicted in Fig.~\ref{MC_effects} and several interesting observations deserve attention.
First, we can see from Fig.~\ref{MC_1} and Fig.~\ref{MC_2} that the MC effect makes the actual beam pattern deviate from the ideal one, primarily manifested as increased side-lobe levels and slightly broadened main-lobe width.
Second, different devices with different antenna arrays lead to varying degrees of distortion in the beam pattern. This implies that the actual beam patterns with MC effects exhibit discernible device-specific characteristics. Moreover, due to the distinct locations of the two devices, the orientation angles of the main lobes are also markedly different, further enhancing the distinguishability of the beam pattern.
Third, one can note from Fig.~\ref{MC_1} or Fig.~\ref{MC_2} that a high degree of similarity between the beam patterns generated by the same device at different time instants, indicating the relatively stable nature of MC effects within a certain time frame.
The above observations motivate us to exploit the beam pattern feature incorporating MC effects to achieve identity authentication for the BA stage. 
\vspace{-0.5em}
\subsection{Signaling Model of Beam Alignment Stage}
Due to the sparse scattering environment in the mmWave communication system, a single-path line of sight (LoS) channel between Alice and Bob $\mathbf{H}\in\mathbb{C}^{N_r\times N_t}$ is considered here. If we use $\alpha$, $\theta$ and $\phi$ to denote the channel gain, angle of arrival (AoA), and angle of departure (AoD) respectively, then $\mathbf{H}$ with MC effects is characterized by 
\begin{align}
    \mathbf{H}= \alpha[\mathbf{C_r}\mathbf{a}_r(\theta)][\mathbf{C_t}\mathbf{a}_t(\phi)]^{H},              
\end{align}
where $\mathbf{a}_r(\theta)$ and $\mathbf{a}_t(\phi)$ denote the steering vectors associated with the AoA and AoD, respectively. They can be further written as 
\begin{align}
     \mathbf{a}_{r}(\theta)&=[1,e^{j\pi\sin(\theta)},\dots,e^{j\pi(N_{r}-1)\sin(\theta)}]^{T},\\
     \mathbf{a}_{t}(\phi)&=[1,e^{j\pi\sin(\phi)},\dots,e^{j\pi(N_{t}-1)\sin(\phi)}]^{T}.
\end{align} 

During the BA stage, an exhaustive search-based beam scanning scheme is adopted by Alice and Bob with the predefined beam codebooks $\mathcal{C}_{T}=\{\mathbf{w}_{l_T}\in\mathbb{C}^{N_t\times 1},l_{T}\in[1,2,\dots,L_T]\}$ and $\mathcal{C}_{R}=\{\mathbf{f}_{l_R}\in\mathbb{C}^{N_r\times 1},l_{R}\in[1,2,\dots,L_R]\}$. In specific, $\mathbf{w}_{l_T}\in\mathbb{C}^{N_t\times 1}$ and $\mathbf{f}_{l_R}\in\mathbb{C}^{N_r\times 1}$ denote the selected codewords from the codebooks by Alice and Bob, respectively. $L_T$ and $L_R$ are the size of $\mathcal{C}_{T}$ and $\mathcal{C}_{R}$, respectively. The $L_T$ unit-norm beams jointly cover the entire AoD region $\Phi$ and the $L_R$ unit-norm beams jointly span the entire AoA region $\Theta$. Then, the entire transceiver beam pair codebook $\mathcal{C}$ is composed of the Cartesian product of $\mathcal{C}_{T}$ and $\mathcal{C}_{R}$ (i.e., $\mathcal{C}=\{(\mathbf{w},\mathbf{f}):\mathbf{w}\in\mathcal{C}_T,\mathbf{f\in\mathcal{C}_R}\}$) with the size of $L=L_R L_T$. For ease of expression, we represent ($\mathbf{w}_l$, $\mathbf{f}_l$) as the $l$-th beam pair in the codebook $\mathcal{C}$, $l\in[1,L]$.

Based on the above beam scanning scheme, Alice utilizes the selected codeword $\mathbf{w}_l$ to transmit the pilot sequence $\mathbf{x}\in\mathbb{C}^{N}$ with beam training budget $N$ symbols and Bob employs the receiver beam $\mathbf{f}_l$ to measure the power of the received signal.  For the $l$-th beam pair, the output signal observed at the receiver side $\mathbf{y}\in\mathbb{C}^{N\times1}$ is formulated by
\begin{align}
    \mathbf{y}_l = \sqrt{P}\mathbf{f}_l^{H}\mathbf{H}\mathbf{w}_l\mathbf{x}+\mathbf{n},
\end{align}
where $P$ is the transmission power and the pilot sequence carries energy $||\mathbf{x}||^2_2=N$. Moreover, $\mathbf{n}\in\mathbb{C}^{N\times 1}$ denotes the zero-mean complex Gaussian noise with variance $\sigma^2_n$. 

After receiving $\mathbf{y}_l$, Bob applies the known $\mathbf{x}$ to conduct a match-filtered output:
\begin{align}
    y_l = \mathbf{x}^H\mathbf{y}_l=\sqrt{P}N\mathbf{f}_l^{H}\mathbf{H}\mathbf{w}_l +\Tilde{n},\ \ \ \ l\in[1,L],
\end{align}
where $\Tilde{n}=\mathbf{x}^H\mathbf{n}\thicksim\mathcal{CN}(0,N\sigma^2_n)$. Then, SNR of $y_l$ is defined as SNR = $\frac{NP|\mathbf{f}_l^{H}\mathbf{H}\mathbf{w}_l|^2}{\sigma^2_n}$. For the received $L$ beam signals, Bob selects the beam pair that maximizes the match-filtered output as the beam establishment candidate:
\begin{align}
    \hat{l}=\underset{l \in[1: L]}{\arg \max }\left|y_l\right|.
\end{align}

Following the BA stage, the selected beam pair will be used for data transmission. We assume that there are $N_{\rm{total}}$ symbols in a time slot, among them $N$ symbols are used for beam alignment. Similar to \cite{DBLP:journals/tcom/ZhangHZY20}, we model the pilot overhead as a discount in effective rate and denote it as $1-\frac{N}{N_{\rm{total}}}$. We use $R$ to denote the communication rate between Alice and Bob during the data transmission stage. Given the effects of BA quality on $R$, we adopt the concept of average effective rate to characterize $R$ as
\begin{align}\label{average_capcity}
    R=\left(1-\frac{N}{N_{\rm{total}}}\right)\mathbb{E}\left\{\log\left(1+\frac{NP|\mathbf{f}_{\hat{l}}^{H}\mathbf{H}\mathbf{w}_{\hat{l}}|^2}{\sigma^2_n}\right)\right\}.
\end{align}
\vspace{-2em}
\section{CovertAuth Scheme}\label{design_covertauth}
In this section, we first present an overview of the proposed CovertAuth scheme, consisting of signal optimization-based covert communication mechanism and adaptive weight-based authentication mechanism. Then, we introduce the detailed processes involved in both mechanisms to demonstrate how CovertAuth combats both eavesdropping and impersonation attacks. 

\vspace{-0.8em}
\subsection{Covert Communication Mechanism Design}
The covert communication mechanism comprises four steps: theoretical modeling of covert communication rate and covertness constraint, optimization problem formulation, iterative algorithm solution, and BA operation. In specific, CovertAuth first exploits the knowledge of order statistics to derive the closed-form expression of the successful BA probability as the performance metric to evaluate the BA quality. By considering the impact of successful BA probability on communication quality, the theoretical model of the covert communication rate is also developed. Then, the covertness constraint on the adversary is characterized by using the Kullback-Leibler (KL) divergence technique and the metric of detection error probability. According to the covertness requirement of the communication system, we design a secure communication mechanism by formulating the settings of transmission power $P$ and beam training budget $N$ as an optimization problem with the goal of maximizing the covert communication rate between authorized entities. Due to the non-convex nature of the problem, an alternating optimization iteration algorithm is used to solve it.
Finally, the transmitter leverages the designed beam signals to perform beam alignment with the receiver. 
It can be seen that the above mechanism carefully designs transmission parameters ($P$ and $N$) to ensure that the communication behavior cannot be detected by malicious attackers and thus to effectively combat eavesdropping attacks.

\subsection{Adaptive Weight-based Authentication Mechanism Design}
The adaptive weight-based authentication mechanism consists of five steps: theoretical modeling of authentication performance metrics, revealing the relationship between weights and beamforming gain, optimization problem formulation, problem solution, and identity authentication. It is noted that the received BA signals $y_l, l\in[1,L]$ contain the beam pattern feature stemming from both the device-specific MC effect and spatial information of the transmitter. Therefore, CovertAuth performs a weighted summation of $L$ beam pair signals to design an adaptive weight-based identity discriminator. Before making an identity decision with such a discriminator, it is significant to determine the settings of the concerned feature weights. To this end, we first use the statistical signal processing technique to develop theoretical models of successful detection and false alarm probabilities. Based on such models, we reveal the functional relationship between feature weights and beamforming gains of the received $L$ beam pair signals. Then, given a constrained false alarm probability, the determination of the feature weights is formulated as an optimization problem and such a highly complex problem is solved by a sequential quadratic programming algorithm. Finally, based on the obtained adaptive weights, the receiver exploits the above-mentioned weighted energy detector to achieve an identity validation of the transmitter and thus effectively combat the impersonation attacks.

\vspace{-2em}
\section{Performance Modeling and Optimization}\label{theoretical_analysis}
In this section, we first present the covert communication mechanism design of CovertAuth. Then, an adaptive feature weight-based authentication mechanism and the determination of the concerned feature weights are also introduced. Finally, we conduct a convergence analysis and computational complexity of the CovertAuth scheme. 
\vspace{-1.4em}
\subsection{Covert Communication Mechanism Design}\label{covert_phase}
\textbf{Theoretical Modeling of Communication Rate.} 
To derive the closed-form expression of $R$ in \eqref{average_capcity}, we proceed in two steps: 1) accurately characterize BA performance, and 2) derive an analytical expression for the average rate model that captures the effects of BA performance. 
For the first step, we intend to quantify the effective channel gain corresponding to the $l$-th beam pair $g_l=|\mathbf{f}_l^{H}\mathbf{H}\mathbf{w}_l|^2$ with a predefined beam codebook and then exploit the probability of successful beam alignment $P_{\rm{a}}$ to evaluate BA quality. In particular, $g_l$ is further written as
\begin{align}
    g_l= &|\mathbf{f}_l^{H}\mathbf{H}\mathbf{w}_l|^2\nonumber\\
    =&|\alpha|^2|\mathbf{f}_l^{H}\Tilde{\mathbf{a}}_r(\theta)\Tilde{\mathbf{a}}_t(\phi)^H\mathbf{w}_l|^2\nonumber\\
    =&|\alpha|^2F_l(\theta)W_l(\phi),
\end{align}
where $\Tilde{\mathbf{a}}_t(\phi)=\mathbf{C_t}\mathbf{a}_t(\phi)$ and $\Tilde{\mathbf{a}}_r(\theta)=\mathbf{C_r}\mathbf{a}_r(\theta)$ denote the actual steering vectors at the transceiver ends impacted by the MC effects. In addition, $W_l(\phi)=|\Tilde{\mathbf{a}}_t(\phi)^H\mathbf{w}_l|^2$ and $F_l(\theta)=|\mathbf{f}_l^{H}\Tilde{\mathbf{a}}_r(\theta)|^2$ are beamforming gain at the transmitter and receiver side, respectively. Then, we follow the operation in \cite{DBLP:journals/tcom/ZhangHSWY17} to quantify $W_l(\phi)$ and $F_l(\theta)$:
\begin{align}
W_l(\phi) & = \begin{cases}W_T \triangleq \frac{4 \pi}{\left|\Omega_T\right| / L_T}, & \text { if } \phi \in \Phi_{\mathbf{w}_l}, \\
0, & \text { otherwise},\end{cases} \\
F_l(\theta) & = \begin{cases}F_R \triangleq \frac{4 \pi}{\left|\Omega_R\right| / L_R}, & \text { if } \theta \in \Theta_{\mathbf{f}_l}, \\
0, & \text { otherwise},\end{cases}
\end{align}
where $\Omega_T$ and $\Omega_R$ denote the solid angles covering the entire AoD and AoA region. Moreover, $\Phi_{\mathbf{w}_l}$ and $\Theta_{\mathbf{f}_l}$ are the covered regions related to the beam pair $\mathbf{w}_l$ and $\mathbf{f}_l$, respectively. The above quantization result implies that when $\phi$ and $\theta$ fall within the coverage area $\Phi_{\mathbf{w}_l}$ and $\Theta_{\mathbf{f}_l}$, respectively, the maximum transmitting and receiving beamforming gain can be obtained as $W_T$ and $F_R$. Otherwise, the beamforming gain is quantified as zero. Accordingly, $g_l$ is quantified as
\begin{align}\label{quantified_channel_gain}
    g_l & = \begin{cases}|\alpha|^2F_RW_T, & \text { if } \phi \in \Phi_{\mathbf{w}_l}\ \text{and}\  \theta \in \Theta_{\mathbf{f}_l} , \\
0, & \text { otherwise}.\end{cases} 
\end{align}

It is evident that the precision of BA significantly affects the beamforming gain, and thus the metric for measuring BA performance is of particular interest. We propose to adopt the BA successful probability $P_{\rm{a}}$ as the evaluation metric. Without loss of generality, it is assumed that the optimal beam pair is $l=1$. To obtain the analytical expression of $P_{\rm{a}}$, we define $Y_l=\frac{2|y_l|^2}{N\sigma^2_n}$ and explore the statistic information of $Y_l$.
With the quantified channel gain $g_l$ in \eqref{quantified_channel_gain}, $y_l$ is a complex Gaussian variable with mean $N\sqrt{Pg_l}$ and variance $N\sigma^2_n$. Then, the normalized statistic $Y_l$ obeys a non-central Chi-square distribution $\chi^2_2(\lambda_l)$ with non-central parameter $\lambda_l=\frac{2NPg_l}{\sigma^2_n}$ and degrees of freedom (DoF) 2. In specific, for $l=1$, $Y_1\thicksim\chi^2_2(\lambda_1)$ with $\lambda_1=\frac{2|\alpha|^2NPF_RW_T}{\sigma^2_n}$; otherwise, $Y_l\thicksim\chi^2_2(0),l\in[2,L]$. The successful beam alignment event occurs when $\hat{l}=1$, thus $P_{\rm{a}}$ is mathematically expressed by 
\begin{align}
    P_{\rm{a}}&=1-\mathbf{Pr}(\hat{l}\neq1)\nonumber\\
    &=1-\mathbf{Pr}(Y_1<\max\{Y_2,\cdots,Y_L\}).
\end{align}
Then, we rank $(Y_2, Y_3, \cdots Y_L)$ in an ascending order ($Y_1^{'}, Y_2^{'}, \cdots, Y_{L-1}^{'}$) and $P_{\rm{a}}$ is rewritten as
\begin{align}
    P_{\rm{a}} = 1-\mathbf{Pr}(Y_1<Y_{L-1}^{'}).
\end{align}

We present the following Theorem \ref{theorem_1} to give the closed-form expression of $P_{\rm{a}}$:
\begin{theorem}\label{theorem_1}
    For the adopted exhaustive search-based beam training scheme, let $P_{\rm{a}}$ denote the probability of successful beam alignment. Then, the closed-form expression of $P_{\rm{a}}$ is derived as follows:
    \begin{align}\label{pa}
        P_{\rm{a}}&=1-\sum^{L-2}_{l=0}\left[(-1)^l\frac{\binom{L-2}{l}}{(l+1)(l+2)}\exp\left(-\frac{\lambda_1(l+1)}{2(l+2)}\right)\right]\nonumber\\
        &\ \ \ \times(L-1), 
    \end{align}
    where $\binom{L-2}{l}=\frac{(L-2)!}{l!(L-2-l)!}$ is a combinatorial coefficient.
\end{theorem}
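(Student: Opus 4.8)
The plan is to compute $P_{\rm{a}}$ directly as $\mathbf{Pr}(Y_1 > \max\{Y_2,\dots,Y_L\})$, which coincides with $1 - \mathbf{Pr}(Y_1 < Y_{L-1}')$ since these are continuous variables and ties carry zero probability. First I would record the two densities already fixed by the excerpt: the aligned branch obeys $Y_1 \sim \chi^2_2(\lambda_1)$, whose density is $f_{Y_1}(y) = \tfrac{1}{2} e^{-(y+\lambda_1)/2} I_0(\sqrt{\lambda_1 y})$, while the $L-1$ interfering branches are i.i.d.\ $\chi^2_2(0)$, each with the exponential CDF $F(y) = 1 - e^{-y/2}$. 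Conditioning on $Y_1 = y$ and using independence, the event that all interferers fall below $y$ has probability $[F(y)]^{L-1}$, so that
\[
P_{\rm{a}} = \int_0^\infty [F(y)]^{L-1} f_{Y_1}(y)\, dy = \int_0^\infty (1 - e^{-y/2})^{L-1}\, \tfrac{1}{2} e^{-(y+\lambda_1)/2} I_0(\sqrt{\lambda_1 y})\, dy.
\]

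Next I would expand $(1-e^{-y/2})^{L-1} = \sum_{k=0}^{L-1}\binom{L-1}{k}(-1)^k e^{-ky/2}$ by the binomial theorem and swap this finite sum with the integral (a trivial interchange, since the sum has finitely many terms), so that the whole problem reduces to evaluating, for each $k$, the single Laplace-type Bessel integral $\int_0^\infty e^{-(k+1)y/2} I_0(\sqrt{\lambda_1 y})\, dy$.

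This integral is the main obstacle. I would dispatch it with the identity $\int_0^\infty e^{-py} I_0(2\sqrt{ay})\, dy = \tfrac{1}{p} e^{a/p}$ (valid for $p>0$), which is itself proved by inserting the series $I_0(x) = \sum_{n\ge0}(x/2)^{2n}/(n!)^2$ and integrating term by term via $\int_0^\infty y^n e^{-py}\,dy = n!/p^{n+1}$. With $p = (k+1)/2$ and $a = \lambda_1/4$ this returns $\tfrac{2}{k+1} e^{\lambda_1/(2(k+1))}$; combining it with the prefactor $\tfrac{1}{2}e^{-\lambda_1/2}$ and using $-\tfrac{\lambda_1}{2} + \tfrac{\lambda_1}{2(k+1)} = -\tfrac{\lambda_1 k}{2(k+1)}$ collapses the exponents to the compact form
\[
P_{\rm{a}} = \sum_{k=0}^{L-1}\binom{L-1}{k}\frac{(-1)^k}{k+1}\exp\!\left(-\frac{\lambda_1 k}{2(k+1)}\right).
\]

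Finally I would reconcile this with \eqref{pa}. The $k=0$ term equals $1$ and is peeled off to supply the leading ``$1-$''; re-indexing the rest by $l = k-1$ moves the range to $l = 0,\dots,L-2$, rewrites the exponent as $-\lambda_1(l+1)/(2(l+2))$, and turns the sign into $(-1)^{l+1} = -(-1)^l$. The only remaining check is the factorial identity $\binom{L-1}{l+1}\tfrac{1}{l+2} = (L-1)\tfrac{\binom{L-2}{l}}{(l+1)(l+2)}$, which is immediate after writing both sides out; substituting it reproduces \eqref{pa} verbatim. I expect all genuine difficulty to sit in the Bessel integral and in the index bookkeeping, with the rest being routine.
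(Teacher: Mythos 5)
Your proof is correct and complete: the conditioning on $Y_1$, the binomial expansion of the CDF of $\max\{Y_2,\dots,Y_L\}$ (the $L-1$ i.i.d.\ central $\chi^2_2$ variables), the Laplace--Bessel identity $\int_0^\infty e^{-py}I_0(2\sqrt{ay})\,\mathrm{d}y = \tfrac{1}{p}e^{a/p}$, and the final re-indexing including the combinatorial identity $\binom{L-1}{l+1}\tfrac{1}{l+2} = (L-1)\tbinom{L-2}{l}\big/\left((l+1)(l+2)\right)$ all check out. The paper itself gives no derivation---its proof is a one-line pointer to a similar derivation in a cited reference---and your argument is exactly the standard order-statistics route that such derivations follow, so it serves as a valid self-contained proof of Theorem~\ref{theorem_1}.
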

\begin{proof}
    The results can be obtained by following a similar derivation process in \cite{DBLP:journals/tifs/ZhangLYLCZW21}.
\end{proof}
Notice that the effective channel gain $g_{\hat{l}}$ is determined by the beam alignment performance. In particular, if the successful beam alignment event occurs, $g_{\hat{l}}$ is quantified as $\left|\alpha\right|^2F_RW_T$ with the probability $P_a$; Otherwise, $g_{\hat{l}}$ is quantified as zero with the probability $1-P_a$. By taking the expectation with respect to $g_{\hat{l}}$, we can derive the analytical expression for the average communication rate $R$ as
\begin{align}\label{covert_rate_ideal}
    R&=\left(1-\frac{N}{N_{\rm{total}}}\right)\mathbb{E}_{g_{\hat{l}}}\left\{\log\left(1+\frac{NPg_{\hat{l}}}{\sigma^2_n}\right)\right\}.\nonumber\\
    &=\left(1-\frac{N}{N_{\rm{total}}}\right)P_{\rm{a}}\log\left(1+\frac{|\alpha|^2NPF_RW_T}{\sigma^2_n}\right).
\end{align}

\textbf{Characterization of Covertness Constraint.}
In this part, we adopt the detection performance at the Eve side to evaluate the covertness level during the BA phase \cite{DBLP:journals/tifs/ZhangLZJX22}. Accordingly, a binary hypothesis testing of the received $l$-th beam pair signal $y_{e,l}, l\in[1,L]$ is formulated as
\begin{equation}\label{Detection_hypothesis}
    \left\{\begin{array}{ll}
        \mathcal{D}_0:&y_{e,l}=\Tilde{v}_l,\\
        \mathcal{D}_1:&y_{e,l}=N\sqrt{P}\mathbf{h}_{e}\mathbf{w}_l+\Tilde{v}_l,
       \end{array}\right.
\end{equation}
where $\Tilde{v}_l$ is a zero-mean Complex Gaussian noise with variance $N\sigma^2_e$ at Eve and $\mathbf{h}_{e}=\alpha_e\Tilde{\bm{a}}_t(\phi)^H \in\mathbb{C}^{1\times N_t}$ denotes the channel between Alice and Eve with complex gain $\alpha_e$. In particular, $\mathcal{D}_0$ represents the null hypothesis, indicating that Alice keeps silent. Under $\mathcal{D}_0$, $y_{e,l}$ is independently and identically distributed following $y_{e,l}\thicksim\mathcal{CN}(0,N\sigma^2_{e})$.
Conversely, $\mathcal{D}_1$ is the alternative hypothesis, indicating that Alice performs a transmission behavior with Bob. In this case, $y_{e,l}\thicksim\mathcal{CN}(N\sqrt{P}\mathbf{h}_{e}\mathbf{w}_l,N\sigma^2_{e})$. After stacking all $L$ beam pair signals into a column, we can obtain PDFs of $\mathbf{y}_{e}=[y_{e,1},y_{e,2},\cdots,y_{e,L}]^T\in\mathbb{C}^{L\times1}$ under $\mathcal{D}_0$ and $\mathcal{D}_1$ (denoted by $P_0(\mathbf{y}_{e})$ and $P_1(\mathbf{y}_{e})$, respectively):
\begin{align}
    P_0(\mathbf{y}_{e})&=\frac{1}{[\pi(N\sigma^2_e)]^L}\exp\left(-\frac{\sum^L_{l=1}|y_{e,l}|^2}{N\sigma^2_e}\right),\\
    P_1(\mathbf{y}_{e})&=\frac{1}{[\pi(N\sigma^2_e)]^L}\exp\left(-\frac{\sum^L_{l=1}|y_{e,l}-N\sqrt{P}\mathbf{h}_{e}\mathbf{w}_l|^2}{N\sigma^2_e}\right).
\end{align}
We employ the KL divergence $\mathcal{D}(P_0(\mathbf{y}_{e})||P_1(\mathbf{y}_{e}))$ to characterize the lower bound of minimum total detection error probability $\xi^{*}$. Let $\epsilon$ denote the required covertness level, the covertness constraint imposed on Eve can be formulated by using Pinsker's inequality as 
\begin{align}\label{covert_constraint}
    \frac{\sum_{l=1}^{L}NP|\mathbf{h}_e\mathbf{w}_l|^2}{\sigma^2_e}\leq2\epsilon^2.
\end{align}

\textbf{Design of $N$ and $P$ for Covert Communication.} To prevent eavesdropping attacks, we aim to 
maximize the covert rate between Alice and Bob by carefully designing the beam training budget $N$ and transmitting power $P$, subject to the covertness constraint. Due to the channel estimation error and non-cooperation between Alice and Eve, the imperfect channel with a bounded error model is also assumed:
\begin{align}\label{imperfect_CSI_constraint}
    \mathbf{h}_e=\hat{\mathbf{h}}_e+\Delta \mathbf{h}_e,\ \ \ \forall||\Delta\mathbf{h}_e||^2\leq h_e^2,
\end{align}
where $\hat{\mathbf{h}}_e$ and $\Delta \mathbf{h}_e$ denote the imperfect channel and channel estimation error, respectively. The norm of estimation error $\Delta \mathbf{h}_e$ is bounded by a known constant $h_e^2$. Let $P_{\text {max }}$ and $N_{\text {max }}$ denote the maximum power on the transmitter and beam training budget, respectively.
Therefore, under the imperfect channel state information scenario, the optimization problem with respect to $N$ and $P$ is formulated as
\begin{subequations}\label{problem_opt}
\begin{align}
\max _{P, N}& \ \left(1-\frac{N}{N_{\rm{total}}}\right)P_{\rm{a}}\log\left(1+\frac{|\alpha|^2NPF_RW_T}{\sigma^2_n}\right),  \\ 
\text { s.t. }  &0<P \leq P_{\text {max }}\label{constraint_P}, \\
& 1 \leq N \leq  N_{\text {max }}\label{constraint_N}, \\
& \eqref{covert_constraint}, \eqref{imperfect_CSI_constraint}.\nonumber
\end{align}
\end{subequations}
It is noted that existing algorithms make it difficult to solve the above optimization problem due to the following reasons: 1) the coupling between $N$ and $P$ involved in constraint \eqref{covert_constraint} and objective function; 2) the uncertainty in channel estimation leads to the infinite inequality constraints in \eqref{imperfect_CSI_constraint}; and 3) the discrete variable $N$ results in a mixed integer non-convex optimization problem. 

To tackle these problems, we first exploit the inequality transformation to obtain a more tractable manner. By using \cite[Eq. 26]{DBLP:journals/tsp/ZhaoCSCZ16}, we can obtain the following inequality
\begin{align}
    \left|(\hat{\mathbf{h}}_e+\Delta\mathbf{h}_e)\mathbf{w}_l\right|\leq\left|\hat{\mathbf{h}}_e\mathbf{w}_l\right|+\left|\Delta\mathbf{h}_e\mathbf{w}_l\right|\leq\left|\hat{\mathbf{h}}_e\mathbf{w}_l\right|+h_e||\mathbf{w}_l||.
\end{align}
Based on this inequality transformation,  the constraints in \eqref{covert_constraint} and \eqref{imperfect_CSI_constraint} can be written into a more compact manner:
\begin{align}
    NP\sum_{l=1}^{L}\left||\hat{\mathbf{h}}_e\mathbf{w}_l|+h_e|\mathbf{w}_l|\right|^2\leq2\sigma^2_e\epsilon^2.
\end{align}

To address the couple of variables in the objective function and constraints, we adopt the Lagrange dual-decomposition method in \cite{DBLP:journals/tifs/ZhangLYLCZW21} to decouple the problem \eqref{problem_opt} into a master dual problem and a subproblem. Let $\nu$ denote the nonnegative multiplier and $\bm{\varsigma}=[N,P]^T$ denote the optimization variable vector, then the Lagrange function is given by
\begin{align}
    \mathcal{L}&(\nu;\bm{\varsigma})\nonumber\\
    &=\log\left(1-\frac{N}{N_{\rm{total}}}\right)+\log\left(\log\left(1+\frac{NP|\alpha|^2F_RW_T}{\sigma^2_n}\right)\right)\nonumber\\
    &\ \ \ +\log P_{\rm{a}}-\nu\left(NP\sum_{l=1}^{L}\left||\hat{\mathbf{h}}_e\mathbf{w}_l|+h_e|\mathbf{w}_l|\right|^2-2\sigma^2_n\epsilon^2\right).
\end{align}
Accordingly, the dual function $\mathcal{F}(\nu)$ is formulated by 
\begin{align}
     \mathcal{F}(\nu)&=\max_{\bm{\varsigma}}\mathcal{L}(\nu;\bm{\varsigma}).
\end{align}
For the given dual variable $\nu$, the subproblem in terms of $\bm{\varsigma}$ is given by
\begin{subequations}
    \begin{align}\label{sub_problem}
    \max_{\bm{\varsigma}}&\ \log\left(1-\frac{N}{N_{\rm{total}}}\right)+\log\left(\log\left(1+\frac{NP|\alpha|^2F_RW_T}{\sigma^2_n}\right)\right)\nonumber\\
    &+\log P_{\rm{a}}-\nu NP\Gamma,\\
    \text{s.t.}&\  \eqref{constraint_P}, \eqref{constraint_N},\nonumber
\end{align}
\end{subequations}
where $\Gamma=\sum_{l=1}^{L}\left||\hat{\mathbf{h}}_e\mathbf{w}_l|+h_e|\mathbf{w}_l|\right|^2$. Based on the optimal variable $\bm{\varsigma}$, the master dual problem with respect to $\nu$ is as follows:
\begin{subequations}
\begin{align}\label{master_problem}
    \min_{\nu}&\  \mathcal{F}(\nu),\\
    \text{s.t.} &\  \eqref{constraint_P}, \eqref{constraint_N}\nonumber.
\end{align}
\end{subequations}
As to the master dual problem in \eqref{master_problem} and the subproblem in \eqref{sub_problem}, an alternating optimization algorithm combined with the successive convex approximation (SCA) method is employed to achieve an optimal solution of $\nu$ and $\bm{\varsigma}$. Let the superscript $t$ be the $t$-th iteration. 
\subsubsection{Update $\nu$ for fixed $N$ and $P$}Since the master dual problem in \eqref{master_problem} is always convex, we propose to adopt the subgradient method to optimize $\nu^{t+1}$ given the fixed $\bm{\varsigma}^t$, that is,
\begin{align}\label{solver_nu}
    \nu^{t+1}=\max\left(0,\nu^t+\eta\left( N^tP^t\Gamma-2\sigma^2_e\epsilon^2\right)\right),
\end{align}
where $\eta$ denotes the step size and operator $\max(0,a)$ ensures that the result is non-negative.
\subsubsection{Update $P$ for fixed $N$ and $\nu$} For the given $N$ and $\nu$, the optimization subproblem with respect to $P$ can be formulated as
\begin{subequations}\label{sub_problem_P}
    \begin{align}
    \max_{P}&\ \log P_{\rm{a}}+\log\left(\log\left(1+\frac{NP|\alpha|^2F_RW_T}{\sigma^2_n}\right)\right)\nonumber\\
    &\ \ -\nu NP\Gamma,\\
    \text{s.t.}&\  \eqref{constraint_P}.\nonumber
\end{align}
\end{subequations}
It is obvious that the above problem is difficult to solve due to the non-concave part of the objective function. To this end, we first partition the objective function into concave and non-concave parts (denoted by $\mathcal{F}_{\rm{c}}(P)$ and $\mathcal{F}_{\rm{n}}(P)$, respectively):
\begin{align}
    \mathcal{F}_{\rm{c}}(P)=&\log\left(\log\left(1+\frac{NP|\alpha|^2F_RW_T}{\sigma^2_n}\right)\right)-\nu NP\Gamma,\\
    \mathcal{F}_{\rm{n}}(P)=&\log P_{\rm{a}} \label{P_a_P}.
\end{align}
Then, we retain the partial concavity of the original objective function $\mathcal{F}_{\rm{c}}(P)$ and linearize the non-concave part $\mathcal{F}_{\rm{n}}(P)$. This allows us to construct a concave surrogate function $\Tilde{\mathcal{F}}(P)$ that approximates the original function while maintaining the desirable properties for optimization:
\begin{align}\label{linear_P}
    \Tilde{\mathcal{F}}(P)=\mathcal{F}_{\rm{c}}(P)+\frac{\partial \mathcal{F}_{\rm{n}}(P)}{\partial P}\bigg|_{P=P^t}\left(P-P^t\right)-\tau_p\left(P-P^t\right)^2,
\end{align}
where $\tau_p$ is a positive constant such that the surrogate function $\Tilde{\mathcal{F}}(P)$ is strongly concave. $\frac{\partial \mathcal{F}_{\rm{n}}(P)}{\partial P}$ is the partial derivative of $\mathcal{F}_{\rm{n}}(P)$ with respect to $P$.
Finally, the optimal value of $P^{t+1}$ is obtained by solving the following concave optimization problem:
\begin{subequations}
  \begin{align}\label{solver_P}
    \max _{P}&\ \ \Tilde{\mathcal{F}}(P),\\
    \text{s.t.}&\  \eqref{constraint_P}.\nonumber
\end{align}
\end{subequations}
It is obvious that the above problem can be effectively solved by using the CVX toolbox \cite{boyd2004convex}.
\subsubsection{Update $N$ for fixed $P$ and $\nu$}With fixed $P$ and $\nu$, the optimization for $N$ is formulated as
\begin{subequations}
    \begin{align}
        \max_{N}&\ \log P_{\rm{a}}+\log\left(\log\left(1+\frac{NP|\alpha|^2F_RW_T}{\sigma^2_n}\right)\right)\nonumber\\
    &\ \ +\log\left(1-\frac{N}{N_{\rm{total}}}\right)-\nu NP\Gamma,\\
    \text{s.t.}&\  \eqref{constraint_N}.\nonumber
    \end{align}
\end{subequations}
Similar to the optimization process for $P$, we also partition the objective function into concave and non-concave parts (denoted by $\mathcal{F}_{\rm{c}}(N)$ and $\mathcal{F}_{\rm{n}}(N)$, respectively):
\begin{align}
    \mathcal{F}_{\rm{c}}(N)=&\log\left(\log\left(1+\frac{NP|\alpha|^2F_RW_T}{\sigma^2_n}\right)\right)\nonumber\\
    &+\log\left(1-\frac{N}{N_{\rm{total}}}\right)-\nu NP\Gamma,\\
    \mathcal{F}_{\rm{n}}(N)=&\log P_{\rm{a}}.\label{P_a_N}
\end{align}
The concave surrogate function $\Tilde{\mathcal{F}}(P)$ is written as
\begin{align}\label{linear_N}
    \Tilde{\mathcal{F}}(N)&=\mathcal{F}_c(N)+\frac{\partial \mathcal{F}_{\rm{n}}(N)}{\partial N}\bigg|_{N=N^t}\left(N-N^t\right)\nonumber\\
    &\ \ \ -\tau_N\left(N-N^t\right)^2,
\end{align}
where $\tau_N$ is a positive constant ensuring that $\Tilde{\mathcal{F}}(N)$ is strongly concave.  $\frac{\partial \mathcal{F}_{\rm{n}}(N)}{\partial N}$ is the derivative part of $\mathcal{F}_{\rm{n}}(N)$ with respect to $N$.
Finally, the variable $N$ is optimized by solving the following problem
\begin{subequations}
  \begin{align}\label{solver_N}
    \max _{N}&\ \ \Tilde{\mathcal{F}}(N),\\
    \text{s.t.}&\  \eqref{constraint_N}.\nonumber
\end{align}
\end{subequations}
Obviously, the above problem can be solved by using existing algorithms or the CVX toolbox. To ensure that the obtained $N^{t+1}$ conforms to the constraint of a discrete integer, we round $N^{t+1}$ to its nearest integer:
\begin{align}\label{round_N}
    N^{t+1}=\left\lfloor N^{t+1}+\frac{1}{2}\right\rfloor.
\end{align}

Based on the above process, the whole design procedure is summarized in Algorithm \ref{Algothrim}.
\begin{algorithm}
	\caption{Joint Transmitting Power and Beam Training Budget Design Algorithm.}
        \label{Algothrim}
        \renewcommand{\algorithmicrequire}{ \textbf{Input:}}     
        \renewcommand{\algorithmicensure}{ \textbf{Output:}}  
	\begin{algorithmic}[1]  
		\REQUIRE $\Gamma$, $\sigma^2_e$, $\sigma^2_n$, $t_{\rm{max}}$, tolerance $\Tilde{\rho}$, $\epsilon$, $\eta$, $\tau_p$, $\tau_N$.
        \ENSURE  $P^\star$ and $N^\star$.    
        \STATE \textbf{Initialization:}~ feasible initial point ($\nu^0$ ,$P^0$, $N^0$), $t=0$, $\rho^0=0$.
        \STATE \textbf{while} $\rho^t>\Tilde{\rho}$ and $t\leq t_{\rm{max}}$ \textbf{do}
        \STATE \ \ Update $\nu^{t+1}$ by \eqref{solver_nu}.
        \STATE \ \ Update $P^{t+1}$ by solving \eqref{solver_P}.
        \STATE \ \ Update $N^{t+1}$ by solving \eqref{solver_N}.
        \STATE \ \ Round $N^{t+1}$ by \eqref{round_N}.
        \STATE \ \ Calculate $\rho^{t+1}=\left|\mathcal{L}(\nu^{t+1};\bm{\varsigma}^{t+1})-\mathcal{L}(\nu^{t};\bm{\varsigma}^{t})\right|$.
        \STATE \ \ $t=t+1$.
        \STATE \textbf{end while}
        \RETURN $P^\star=P^t$ and $N^\star=N^t$.
	\end{algorithmic}
\end{algorithm}
\vspace{-0.8em}
\subsection{Adaptive weight-based Authentication Mechanism Design}\label{authentication_phase}
\textbf{Authentication Criterion with $N^\star$ and $P^\star$.} For the receiver Bob, it is imperative to authenticate the identity of the signals and then decide whether to establish a communication link with the transmitter. Due to the MC feature involved in the beam pair signals $y_l$, Bob intends to exploit $y_l$ to accomplish identity verification. The identity validation criterion can be formulated by using a binary hypothesis testing:
\begin{equation}\label{authentication_hypothesis}
    \left\{\begin{array}{ll}
\mathcal{H}_0:&y_{l}=N^\star_0\sqrt{P^\star_0}\mathbf{f}_l^{H}\alpha_0\Tilde{\mathbf{a}}_{r,0}(\theta_0)\Tilde{\mathbf{a}}_{t,0}^{H}(\phi_0)\mathbf{w}_l+\Tilde{n}_l,\\
\mathcal{H}_1:&y_{l}=N_1\sqrt{P_1}\mathbf{f}_l^{H}\alpha_1\Tilde{\mathbf{a}}_{r,1}(\theta_1)\Tilde{\mathbf{a}}_{t,1}^{H}(\phi_1)\mathbf{w}_l+\Tilde{n}_l.
       \end{array}\right.
\end{equation}
In specific, under the null hypothesis $\mathcal{H}_0$, $\Tilde{\mathbf{a}}_{r,0}(\theta_0)=\mathbf{C_{r}}_{,0}\mathbf{a}_r(\theta_0)$ and $\Tilde{\mathbf{a}}_{t,0}(\phi_0)=\mathbf{C_{t}}_{,0}\mathbf{a}_t(\phi_0)$ denote the steering vectors containing both MC feature and spatial information of the authorized transmitter, indicating that the signals are from Alice. In contrast, on $\mathcal{H}_1$, $\Tilde{\mathbf{a}}_{r,1}(\theta_1)=\mathbf{C_{r}}_{,0}\mathbf{a}_r(\theta_1)$ and $\Tilde{\mathbf{a}}_{t,1}(\phi_1)=\mathbf{C_{t}}_{,1}\mathbf{a}_t(\phi_1)$ denote the steering vectors containing both MC feature and spatial information of the unauthorized transmitter, implying that the current transmitter is Eve. 
Due to the different wireless channels (i.e., Alice-Bob and Eve-Bob), it is difficult for Eve to possess the same beam training budget $N$ and transmitting power $P$ as those related to Alice. Herein, we consider a challenging scenario (i.e., $N_1\approx N^\star_0$ and $P_1\approx P^\star_0$) to evaluate the robustness of the proposed authentication method. 

\textbf{Sum-weighted Authentication Decision.}
As the first attempt to integrate covert communication mechanism with authentication mechanism to simultaneously combat both eavesdropping and impersonation attacks in the mmWave BA stage, this paper considers a simple energy detector-based authentication decision mechanism widely adopted in available works \cite{DBLP:journals/tvt/HuangW18}, \cite{DBLP:journals/iotj/HeNZQ24}, \cite{DBLP:journals/tdsc/ZhangTNJF24}. Such a study helps us to have a quick view for the essential benefits and feasibility of integration of these two security mechanisms, and thus lays the foundation for future research incorporating more advanced authentication techniques. It is notable that the energy detector method, although simple, owns the advantages of low computational complexity and no prior knowledge requirement for adversary. These advantages are particular attractive for the mmWave systems concerned in this work. This is because the mmWave systems are usually resource-constrained and latency-sensitive, and the dynamic and complex mmWave propagation environment makes it highly difficult for the receiver to obtain prior statistical knowledge about the adversary (e.g., transmission power $N_1$ and directional angles $\theta_1$, $\phi_1$). In addition, note that each beam pair signal $y_l$ carries a certain degree of MC fingerprint and spatial location information of the transmitter. 
Thus, we assign various weights $\omega_l$ to the $L$ beam pair signals $y_l, l\in[1,L]$ and then exploit the energy detector to design the final identity discriminator:
\begin{align}\label{decision}
    T =\sum^{L}_{l=1}\omega_l\frac{2|y_l|^2}{N^\star\sigma^2_n}\mathop{\gtrless}_{H_0}^{H_1}\tau,
\end{align}
where $T$ is the decision test statistic and $\tau$ is the threshold. Based on this identity discriminator, Bob can effectively validate the identity legitimacy of the current transmitter.
\begin{remark}
    The optimality of the detector depends on the availability of the signal statistical knowledge.  It is shown in \cite{kay1993fundamentals} that when the statistical knowledges of received signals under both two hypotheses are fully known, the Neyman-Pearson-based likelihood ratio test (LRT) achieves the optimal detection performance while the energy detector represents a suboptimal choice compared to the LRT. However, for a system where such statistical knowledges are not available (e.g., the mmWave system concerned in this paper), LRT can hardly achieve the optimal performance while the energy detector can still provide an efficient detection performance.
\end{remark}

As to the feature weight settings, we first develop theoretical models of authentication performance metrics and then reveal the relationship between weights and beamforming gains of the beam signals. Based on these theoretical models, an optimization problem is formulated to determine the optimal weights. The detailed process is presented as follows.

\textbf{Theoretical Modeling of Authentication Performance Metrics.} Theoretical analysis of the proposed authentication scheme can facilitate the prediction and optimization of the security performance. Accordingly, we derive the analytical expressions of authentication metrics in terms of false alarm (denoted by $P_f$) and detection probabilities (denoted by $P_d$). The false alarm event refers to the receiver incorrectly identifying a legitimate device as an unauthorized one. Thus, $P_f$ is mathematically expressed as $P_f\triangleq\mathbf{Pr}(T>\tau|\mathcal{H}_0)$. On the other hand, a detection event is triggered when the receiver successfully identifies ongoing unauthorized attempts, that is  $P_d\triangleq\mathbf{Pr}(T>\tau|\mathcal{H}_1)$. To derive the closed-form expressions of $P_f$ and $P_d$, for a given $\tau$, it is necessary to investigate the right tail probabilities of test statistic $T$ under two hypotheses. 
However, since $T$ is a weighted sum of $L$ noncentral Chi-square variables, the PDF of $T$ cannot be directly acquired. Thus, we provide the following Lemma \ref{lemma_pdf} to approximate the right tail probability of $T$ under a threshold $\tau$.
\begin{lemma}\label{lemma_pdf}
    Let $Y_l$ denote the $l$-th noncentral chi-squared random variable with noncentrality parameter $\lambda_l$ and DoF 2. Consider a weighted sum of $L$ noncentral chi-squared random variables $T=\sum^{L}_{l=1}\omega_lY_l$, then the right-tail probability of $T$ with a threshold $\tau$ is approximated by a right-tail probability for a noncentral  chi-squared random variable $K\thicksim\chi^2_{\upsilon_K}(\lambda_K)$ under a new threshold $\tau_K$:
     \begin{align}
        \mathbf{Pr}\left(T>\tau\right)\approx \mathbf{Pr}\left(K>\tau_K\right),
    \end{align}
     where $\tau_K=[(\tau-\mu_T)/\sigma_T]\sqrt{2\upsilon_K+4\lambda_K}+\upsilon_K+\lambda_K$. $\mu_T=\gamma_1$ and $\sigma_T=\sqrt{2\gamma_2}$ denote the mean and standard deviation of $T$ with $\gamma_k=2\sum_{l=1}^L(\omega_l)^k+k\sum_{l=1}^{L}(\omega_l)^k\lambda_{l}$. Moreover, $\lambda_K$ and $\upsilon_K$ are respectively calculated by
     \begin{align}\label{lemma_lambda_nu}
        \lambda_K=s_1a^3-a^2,\ \ \ \ \upsilon_K=a^2-2\lambda_K,
    \end{align}
    where $a=1/(s_1-\sqrt{(s_1)^2-s_2})$ with $s_1=\gamma_3/(\gamma_{2})^{3/2}$ and $s_2=\gamma_4/(\gamma_2)^2$. 
\end{lemma}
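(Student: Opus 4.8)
The plan is to realize the stated approximation as an instance of the standard four-cumulant (moment-matching) chi-squared approximation, in which a weighted sum of independent noncentral chi-squared variables is replaced by a single, affinely rescaled noncentral chi-squared whose first four cumulants coincide with those of $T$. First I would record the cumulants of $T$. Since the $Y_l\thicksim\chi^2_2(\lambda_l)$ are independent, cumulants add and scale as $\omega_l^k$, so with the textbook formula $\kappa_k(Y_l)=2^{k-1}(k-1)!(2+k\lambda_l)$ one gets $\kappa_k(T)=2^{k-1}(k-1)!\,\gamma_k$ with $\gamma_k=2\sum_l\omega_l^k+k\sum_l\omega_l^k\lambda_l$. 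The first two cumulants then give $\mu_T=\kappa_1=\gamma_1$ and $\sigma_T^2=\kappa_2=2\gamma_2$, that is $\sigma_T=\sqrt{2\gamma_2}$, exactly as claimed.

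Next I would introduce the target $K\thicksim\chi^2_{\upsilon_K}(\lambda_K)$, whose reduced cumulants are $\upsilon_K+k\lambda_K$, and require that the affine map $K\mapsto\sigma_T(K-\mu_K)/\sigma_K+\mu_T$, with $\mu_K=\upsilon_K+\lambda_K$ and $\sigma_K=\sqrt{2\upsilon_K+4\lambda_K}$, reproduce the skewness and excess kurtosis of $T$. The observation that makes this tractable is that both $T$ and $K$ carry the \emph{common} prefactor $2^{k-1}(k-1)!$ in their $k$-th cumulant, so this prefactor cancels in every standardized ratio. Matching skewness and kurtosis therefore reduces to the two scalar equations $s_1=(\upsilon_K+3\lambda_K)/(\upsilon_K+2\lambda_K)^{3/2}$ and $s_2=(\upsilon_K+4\lambda_K)/(\upsilon_K+2\lambda_K)^2$ with $s_1=\gamma_3/\gamma_2^{3/2}$ and $s_2=\gamma_4/\gamma_2^2$.

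I would then solve this system explicitly. Substituting $b=\upsilon_K+2\lambda_K$ turns the two equations into $\lambda_K=s_1 b^{3/2}-b$ and $s_2 b=2s_1\sqrt{b}-1$; the latter is a quadratic in $\sqrt{b}$ whose relevant root is $\sqrt{b}=(s_1+\sqrt{s_1^2-s_2})/s_2=1/(s_1-\sqrt{s_1^2-s_2})=:a$. Back-substitution with $a^2=b$ yields $\lambda_K=s_1 a^3-a^2$ and $\upsilon_K=b-2\lambda_K=a^2-2\lambda_K$, precisely the formulas in the statement. Finally, since the standardizing map is affine and increasing, the event $\{T>\tau\}$ is carried to $\{K>\tau_K\}$ with $\tau_K=\mu_K+(\tau-\mu_T)\sigma_K/\sigma_T=[(\tau-\mu_T)/\sigma_T]\sqrt{2\upsilon_K+4\lambda_K}+\upsilon_K+\lambda_K$, giving the claimed right-tail identity.

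The main obstacle is not the algebra but the justification of the approximation itself: replacing $T$ by a rescaled $\chi^2_{\upsilon_K}(\lambda_K)$ is exact only in the first four cumulants, so the tail agreement $\mathbf{Pr}(T>\tau)\approx\mathbf{Pr}(K>\tau_K)$ must ultimately be read off an Edgeworth/Cornish--Fisher expansion, in which matching through the fourth cumulant annihilates the leading correction terms. I would therefore present the cumulant-matching derivation as the substantive content, note the admissibility condition $s_1^2\geq s_2$ that keeps $a$, $\lambda_K$, $\upsilon_K$ real and nonnegative, and invoke the established accuracy of this chi-squared approximation to close the argument rather than attempting a sharp error bound.
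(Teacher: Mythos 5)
Your proposal is correct and is essentially the paper's approach: the paper proves the lemma by deferring entirely to the Liu--Tang--Zhang chi-square approximation in \cite{liu2009new}, and your argument --- additivity of cumulants giving $\kappa_k(T)=2^{k-1}(k-1)!\,\gamma_k$, matching the standardized skewness and kurtosis of $T$ with those of $\chi^2_{\upsilon_K}(\lambda_K)$, and solving the resulting quadratic in $\sqrt{\upsilon_K+2\lambda_K}$ to obtain $a=1/(s_1-\sqrt{s_1^2-s_2})$, $\lambda_K=s_1a^3-a^2$, $\upsilon_K=a^2-2\lambda_K$, and the affine threshold map $\tau\mapsto\tau_K$ --- is precisely the derivation underlying that cited result. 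Your caveat that these formulas require $s_1^2\ge s_2$ (otherwise the reference falls back to a degenerate central chi-square case) is correct and is left implicit in the paper.
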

\begin{proof}
    Please refer to \cite{liu2009new} for more details.
\end{proof}

Based on the above Lemma \ref{lemma_pdf}, the analytical expressions of $P_f$ and $P_d$ are summarized in Theorem \ref{theorem_2}.
\begin{theorem}\label{theorem_2}
    Given a fixed threshold $\tau$, $P_f$ and $P_d$ of the authentication scheme are given by, respectively
    \begin{align}
     P_f&\approx\int_{\tau_{A}}^{\infty}\frac{1}{2}\left(\frac{x}{\lambda_A}\right)^{\frac{(\upsilon_A-2)}{4}}\exp\left(-\frac{1}{2}\left(x+\lambda_A\right)\right)\nonumber\\
     &\ \ \ \times I_{\frac{\upsilon_A}{2}-1}\left(\sqrt{x\lambda_A}\right){\rm{d}}x,\label{pf_theo}\\
     P_d&\approx\int_{\tau_{E}}^{\infty}\frac{1}{2}\left(\frac{x}{\lambda_E}\right)^{\frac{(\upsilon_E-2)}{4}}\exp\left(-\frac{1}{2}\left(x+\lambda_E\right)\right)\nonumber\\
     &\ \ \ \times I_{\frac{\upsilon_E}{2}-1}\left(\sqrt{x\lambda_E}\right){\rm{d}}x,\label{pd_theo}
    \end{align}
    where $\tau_{K}=[(\tau-\mu_{T,K})/\sigma_{T,K}]\sqrt{2\upsilon_{K}+4\lambda_{K}}+\upsilon_{K}+\lambda_{K},K=\{A,E\}$. To avoid the repetition, the detailed calculation of parameters $(\mu_{T,K},\sigma_{T,K},\upsilon_{K},\lambda_{K})$ are explained in the following proof. $I_{\upsilon}(\cdot)$ is the modified Bessel function of the first kind with order $\upsilon=\frac{\upsilon_K}{2}-1$.
\end{theorem}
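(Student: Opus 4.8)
The plan is to reduce both $P_f$ and $P_d$ to right-tail probabilities of a single noncentral chi-squared variable via Lemma \ref{lemma_pdf}, applied once under each hypothesis. The whole argument is essentially a two-fold invocation of that lemma preceded by an identification of the per-branch noncentrality parameters.

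First I would pin down the distribution of each summand $Y_l=\frac{2|y_l|^2}{N^\star\sigma^2_n}$ appearing in the discriminator \eqref{decision}. Under either branch of \eqref{authentication_hypothesis}, $y_l$ is a complex Gaussian variable whose noise term $\Tilde{n}_l\thicksim\mathcal{CN}(0,N^\star\sigma^2_n)$ is independent across the $L$ beam pairs; hence each $Y_l$ follows a noncentral chi-squared law with DoF $2$ and noncentrality $\lambda_l=2|\mathbb{E}[y_l]|^2/(N^\star\sigma^2_n)$, and the $Y_l$ are mutually independent. Inserting the quantized beamforming gains of \eqref{quantified_channel_gain}, the noncentrality parameters under $\mathcal{H}_0$ reduce to $\lambda_l^{A}=2N^\star P^\star|\alpha_0|^2F_l(\theta_0)W_l(\phi_0)/\sigma^2_n$, while under $\mathcal{H}_1$ (using $N_1\approx N^\star$ and $P_1\approx P^\star$) they become $\lambda_l^{E}=2N^\star P^\star|\alpha_1|^2F_l(\theta_1)W_l(\phi_1)/\sigma^2_n$. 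The difference between the two parameter sets $\{\lambda_l^{A}\}$ and $\{\lambda_l^{E}\}$, driven by Alice's versus Eve's steering vectors and MC matrices, is exactly what separates $P_f$ from $P_d$.

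Next, recognizing that $T=\sum_{l=1}^{L}\omega_l Y_l$ is a weighted sum of independent noncentral chi-squared variables, I would invoke Lemma \ref{lemma_pdf} twice. Under $\mathcal{H}_0$, I substitute $\{\lambda_l^{A}\}$ into the cumulant expressions $\gamma_k=2\sum_l(\omega_l)^k+k\sum_l(\omega_l)^k\lambda_l$ to obtain $\gamma_k^{A}$, and thence $\mu_{T,A}=\gamma_1^{A}$, $\sigma_{T,A}=\sqrt{2\gamma_2^{A}}$, and the matched pair $(\upsilon_A,\lambda_A)$ through \eqref{lemma_lambda_nu}. This produces the surrogate $K_A\thicksim\chi^2_{\upsilon_A}(\lambda_A)$ with rescaled threshold $\tau_A$, so that $P_f=\mathbf{Pr}(T>\tau\mid\mathcal{H}_0)\approx\mathbf{Pr}(K_A>\tau_A)$. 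Repeating the identical bookkeeping with $\{\lambda_l^{E}\}$ yields $(\upsilon_E,\lambda_E,\tau_E)$ and $P_d\approx\mathbf{Pr}(K_E>\tau_E)$.

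Finally I would express each surrogate right-tail probability as an integral of the noncentral chi-squared density. Since the PDF of $K\thicksim\chi^2_{\upsilon_K}(\lambda_K)$ is $\frac12(x/\lambda_K)^{(\upsilon_K-2)/4}e^{-(x+\lambda_K)/2}I_{\upsilon_K/2-1}(\sqrt{x\lambda_K})$, integrating from $\tau_K$ to $\infty$ for $K=A$ and $K=E$ directly recovers \eqref{pf_theo} and \eqref{pd_theo}. The concluding integral representation is routine; the main obstacle is the moment bookkeeping of the middle step, namely correctly propagating the two distinct noncentrality sets through the four cumulants $\gamma_1,\dots,\gamma_4$ and the nonlinear maps $s_1,s_2,a$ of Lemma \ref{lemma_pdf}, together with justifying that the independence of the $\Tilde{n}_l$ across beam pairs legitimizes the additive cumulant formulas.
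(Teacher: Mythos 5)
Your proposal is correct and takes essentially the same route as the paper's own proof: identify each $Y_l$ as a noncentral chi-square variable with DoF 2 under the respective hypothesis (with noncentrality $\lambda_{l,0}$ driven by Alice's channel and $\lambda_{l,1}$ by Eve's), invoke Lemma \ref{lemma_pdf} once per hypothesis to replace the weighted sum $T$ by a surrogate noncentral chi-square variable $K_A$ or $K_E$ with rescaled threshold $\tau_A$ or $\tau_E$, and then write each right-tail probability as the integral of the noncentral chi-square density. The only cosmetic difference is that you substitute the quantized beamforming gains of \eqref{quantified_channel_gain} into the noncentrality parameters, whereas the paper keeps them in the form $|\mathbf{f}_l^{H}\mathbf{H}_0\mathbf{w}_l|^2$; your explicit remark that independence of the noise across beam pairs underlies the additive cumulant formulas is a point the paper leaves implicit.
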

\begin{proof}
     See Appendix \ref{appendix_B} for the detailed derivation process.
\end{proof}
\vspace{-1.2em}
\textbf{Optimization of Weight Values.} The strategic allocation of optimal weights $\bm{\omega}=[\omega_1,\omega_2,\cdots,\omega_L]^T$ to $L$ beam pair signals $y_l,l\in[1,L]$ is crucial for significantly enhancing the authentication performance of CovertAuth. However, how to determine the optimal allocation strategy of $\bm{\omega}$ poses a challenge that remains unresolved. This is primarily because the interplay between weight values and authentication performance metrics has not been clearly revealed.
As a result, we first establish a functional relationship between $\bm{\omega}$ and $P_d$. Then, based on such a relationship, an optimization problem is formulated to acquire the optimal $\bm{\omega}$ that maximizes $P_d$ under a specified $P_f$ constraint.  The detailed optimization process is illustrated as follows. 

According to the Neyman-Pearson theorem, we generally evaluate the performance of $P_d$ under a maximum allowable false alarm probability $P_f=P_f^{\dagger}$. Let $Q_{\chi^2_{\upsilon}(\lambda)}(\tau)$ denote the right tail probability of a noncentral Chi-square variable with a threshold $\tau$ \cite{kay1993fundamentals}.  Then, the corresponding $\tau_{A}^{\dagger}$ and $\tau^{\dagger}$ are given by, respectively
\begin{align}
    \tau^{\dagger}_{A}&=Q^{-1}_{\chi^2_{\upsilon_A}(\lambda_A)}\left(P_f^{\dagger}\right),\label{cal_tauA}\\
    \tau^{\dagger}&=\frac{\tau_A-(\upsilon_A+\lambda_A)}{\sqrt{2\upsilon_A+4\lambda_A}}\sigma_{T,A}+\mu_{T,A}.\label{cal_taudagger}
\end{align}
For the given threshold $\tau^{\dagger}$, $P_d^{\dagger}$ under the constraint $P_f^{\dagger}$ is expressed by    
\begin{align}\label{link_pd_omega}
    P_d^{\dagger}=Q_{\chi^2_{\upsilon_E}(\lambda_E)}\left(\tau^{\dagger}_E\right),
\end{align}
where $\tau^{\dagger}_E=[(\tau^{\dagger}-\mu_{T,E})/\sigma_{T,E}]\sqrt{2\upsilon_E+4\lambda_E}+\upsilon_E+\lambda_E$. Note that the parameters of ($\upsilon_{K},\lambda_{K},\mu_{T,K}$ and $\sigma_{T,K}, K\in\{A,E\}$) are dependent on the $\bm{\omega}$. In other words, the equation \eqref{link_pd_omega} is a function of $\bm{\omega}$, implying that $\bm{\omega}$ directly influences $P_d^\dagger$. By exploiting this functional relationship, an optimization problem with the aim of finding the optimal $\bm{\omega}$ is formulated:
\vspace{-1.2em}
\begin{subequations}\label{opt}
  \begin{align}\label{solver_omega}
    \max _{\bm{\omega}}&\ \ P_d^{\dagger},\\
    \text{s.t.}&\  P_f=P_f^{\dagger},\\
    &\ \sum^{L}_{l=1}\omega_l=1,\\
    &\ 0<\omega_l<1.
\end{align}
\end{subequations}
It is observed that the optimization problem in \eqref{solver_omega} is non-convex due to the product of the exponential function and polynomial in the objective function. Thus, we employ the sequential quadratic programming (SQP) algorithm to solve it with the initialization $\bm{\omega}=\frac{1}{L}\mathbf{I}$ and the whole process is summarized in the Algorithm \ref{algorithm_2}.
\begin{algorithm}
	\caption{Feature Weight Allocation Algorithm.}
        \label{algorithm_2}
        \renewcommand{\algorithmicrequire}{ \textbf{Input:}}     
        \renewcommand{\algorithmicensure}{ \textbf{Output:}}  
	\begin{algorithmic}[1]  
		\REQUIRE $P_f^{\dagger}$.   
        \STATE \textbf{Initialization:}~ $\bm{\omega}=\frac{1}{L}\mathbf{I}$.
        \STATE Calculate $\lambda_A$ and $\upsilon_A$ using equation  \eqref{lemma_lambda_nu}.\label{step_2}
        \STATE Acquire $\tau_A^{\dagger}$ via equation \eqref{cal_tauA}.
        \STATE Compute $\tau^{\dagger}$ using equation \eqref{cal_taudagger}.
        \STATE Calculate $\lambda_E$ and $\upsilon_E$ following a similar way in \eqref{lemma_lambda_nu}.\label{step_6}
        \STATE Obtain $\tau_E^{\dagger}=[(\tau^{\dagger}-\mu_{T,E})/\sigma_{T,B}]\sqrt{2\upsilon_E+4\lambda_E}+\upsilon_E+\lambda_E$.
        \STATE Solve \eqref{opt} with SQP method to acquire $\bm{\omega}^{\star}$.
        \ENSURE  $\bm{\omega}^{\star}$.
	\end{algorithmic}
\end{algorithm}
\vspace{-2em}

\subsection{Convergence and Complexity Analysis}
\subsubsection{Convergence Analysis}Here, we provide an analysis on the convergence property of Algorithm \ref{Algothrim}.  First, the feasible sequences $\{P^t,t=1,…\}$ and $\{ N^t,t=1,…\}$ generated by iterations in Algorithm \ref{Algothrim} are both bounded and compacted, so a limiting point $\bm{\varsigma}^\star=\{P^\star,N^\star\}$ exists based on Algorithm \ref{Algothrim}. Second, the surrogate functions used in Algorithm \ref{Algothrim} satisfy the conditions of function value consistency, gradient consistency and lower bound, so we know from \cite{razaviyayn2014successive} that the objective function of Algorithm \ref{Algothrim} is non-decreasing and the limiting point $\bm{\varsigma}^\star$ satisfy the following Karush-Kuhn-Tucker (KKT) conditions:
\begin{align}
	\nabla\bm{f}(\bm{\varsigma}^{\star})-\nu^{\star}(N^{\star}P^{\star}\sum_{l=1}^{L}\left||\hat{\mathbf{h}}_e\mathbf{w}_l|+h_e|\mathbf{w}_l|\right|^2-2\sigma^2_e\epsilon^2)&=0,\nonumber\\
\nu^\star(N^{\star}P^{\star}\sum_{l=1}^{L}\left||\hat{\mathbf{h}}_e\mathbf{w}_l|+h_e|\mathbf{w}_l|\right|^2-2\sigma^2_e\epsilon^2)&=0,\nonumber\\
N^{\star}P^{\star}\sum_{l=1}^{L}\left||\hat{\mathbf{h}}_e\mathbf{w}_l|+h_e|\mathbf{w}_l|\right|^2-2\sigma^2_e\epsilon^2&\leq0,\nonumber\\
\nu^\star&\geq0,
\end{align}
where $\bm{f}(\bm{\varsigma}^{\star})=\log P_{\rm{a}}^{\star}+\log\left(\log\left(1+\frac{N^{\star}P^{\star}|\alpha|^2F_RW_T}{\sigma^2_n}\right)\right)+\log(1-\frac{N^{\star}}{N_{\rm{total}}})$ and $\nu^\star$ denotes the  dual optimal point. Thus, the limiting point $\bm{\varsigma}^\star$ is a KKT point of problem \eqref{problem_opt}. Based on the observations that the objective function of Algorithm \ref{Algothrim} is non-decreasing and the limiting point $\bm{\varsigma}^\star$ is a KKT point of problem \eqref{problem_opt}, we conclude that Algorithm \ref{Algothrim} converges.

\subsubsection{Complexity Analysis} Notice that the CovertAuth scheme is composed of two parts: the covert transmission phase design and the authentication decision. For the covert transmission phase design in CovertAuth, we observe that its computational complexity is primarily determined by the number of iterations $t_c$ and the computational complexity within each iteration. In particular, for each iteration, the complexity is dominated by updating $P^{t+1}$ in \eqref{solver_P} and $N^{t+1}$ in \eqref{solver_N}. Note that problem \eqref{solver_P} is a convex problem with a single variable and two affine constraints, while the popular interior point method is used in CVX to solve this problem, so the corresponding computational complexity is of order $\mathcal{O}{\{\ln(\frac{1}{\rho_p})\}}$, where $\rho_p$ denotes the accuracy of the tolerable duality gap to guarantee the optimality. Similarly, the problem \eqref{solver_N} has the same order of $\mathcal{O}{\{\ln(\frac{1}{\rho_p})\}}$. Thus, the computational complexity of the covert transmission phase is $\mathcal{O}{\{\ln(\frac{1}{\rho_p})t_c\}}$. For the authentication decision in \eqref{decision}, the computational complexity is of order $\mathcal{O}{\{L\}}$. Therefore, the total complexity of CovertAuth is $\mathcal{O}{\{\ln(\frac{1}{\rho_p})t_c+L\}}$.
\vspace{-0em}
\section{Numerical Results}\label{num results}
\subsection{Parameter Settings}
For the concerned mmWave BA stage, we consider that the transmitter sides of Alice and Eve are both equipped with an imperfect ULA array with $N_t=32$ elements. Moreover, the size of the transmitting beam codebook is set as $L_T=16$. 
As to the receiver Bob, the number of antenna elements is $N_r=16$ with a receiving beam codebook of size $L_R=8$. Consequently, the total number of candidate beam pairs is $L=128$. The total number of symbols within a time-slot is set as $N_{\rm{total}}=5210$.
In regard to the location of these communication entities, we assume that Alice and Eve are located around the receiver Bob. 
In specific, the AoA-AoD pair of Alice-Bob is set to $(\theta_0,\phi_0)=(30^\circ,60^\circ)$, while the AoA-AoD pair related to Eve-Bob is set as $(\theta_1,\phi_1)=(15^\circ,18^\circ)$. 
The corresponding channel gains associated with Alice and Eve are set as $\alpha_0\thicksim\mathcal{CN}(0,1)$ and $\alpha_1\thicksim\mathcal{CN}(0,2)$, respectively. Let $\kappa_n=|\alpha|^2/\sigma^2_n$ and $\kappa_e=|\alpha|^2_e/\sigma^2_e$ denote the pre-beamforming SNR at the Bob and Eve side, respectively. 
The MC effects of Alice and Eve are evaluated by $\sigma^2_{c,0}=0.1$ and $\sigma^2_{c,1}=0.4$, respectively.
The above parameters serve as the default values if not explicitly stated otherwise. We conduct 3000 Monte-Carlo trials on a PC with a 5.4 GHz Intel 14-Core i7 CPU and 32GB RAM to obtain average results.

\vspace{-0.8em}
\subsection{Covert Transmission Performance}\label{covert_performance}
\begin{figure*}[!t]
\vspace{-2.5em}
\subfigure[]
{\includegraphics[width=0.34\linewidth]{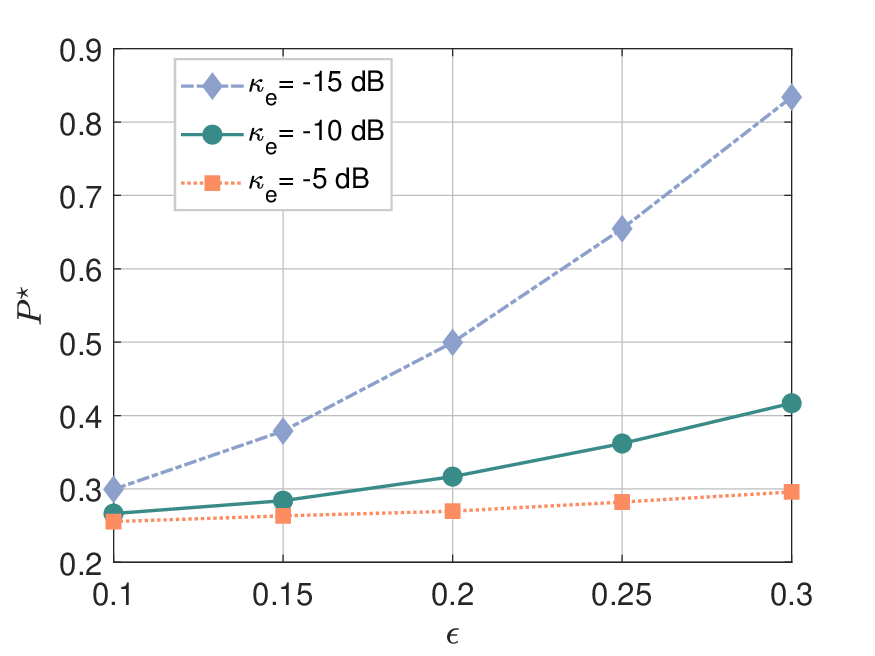}
\label{fig_epsilon_P}}
\hspace{-1.5em}
\subfigure[]
{\includegraphics[width=0.34\linewidth]{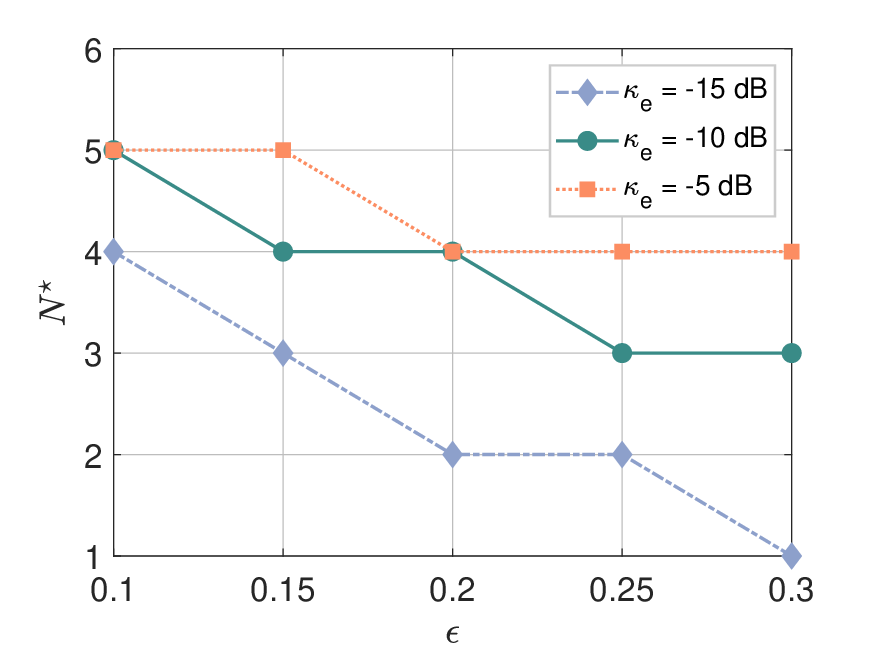}
\label{fig_epsilon_N}}
\hspace{-1.5em}
\subfigure[]
{\includegraphics[width=0.34\linewidth]{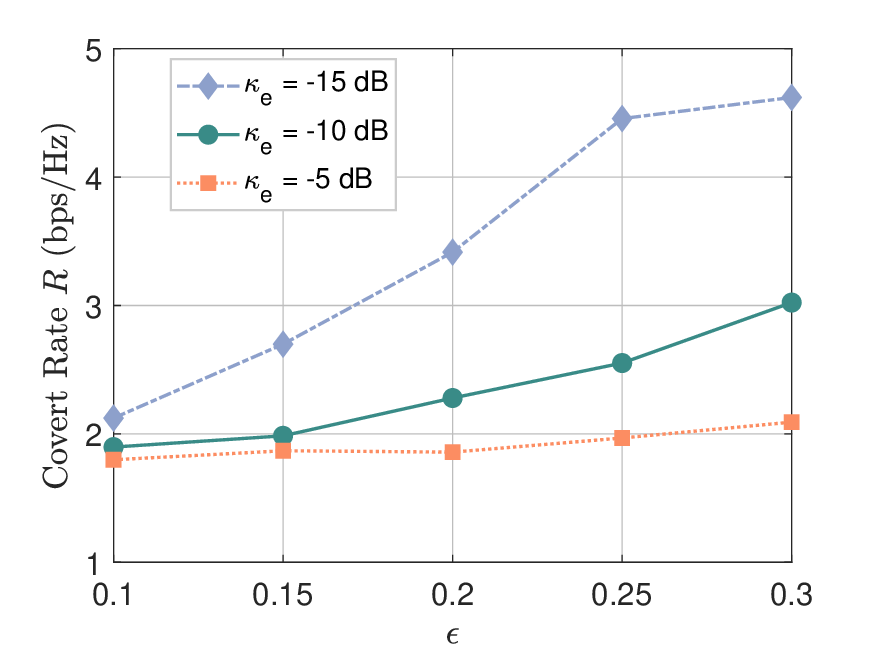}
\label{fig_epsilon_R}}
\setlength{\abovecaptionskip}{-0.5em}
\caption{The impact of covertness level $\epsilon$ on: (a) optimal transmitting power $P^{\star}$; (b) optimal beam training budget $N^{\star}$; (c) optimal covert rate $R$.}
\label{fig_epsilon_parameters}
\end{figure*}

\begin{figure*}[!t]
\vspace{-1.2em}
        \centering
        \hspace{-1em}
        \begin{minipage}{0.34\linewidth}
		\centering
        {\includegraphics[width=1\textwidth]{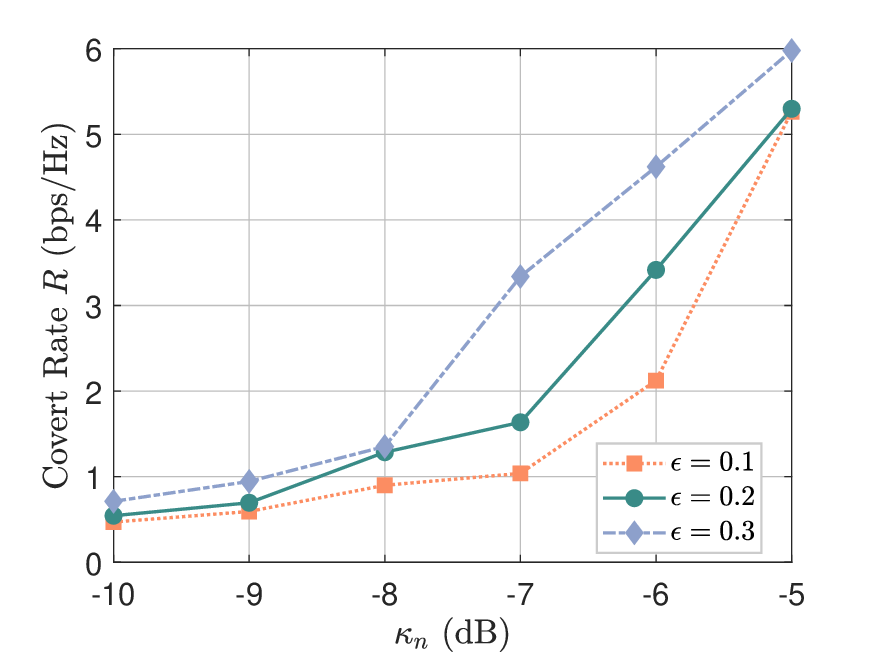}
        \setlength{\abovecaptionskip}{-0.5em}
        \caption{$R$ vs. $\kappa_n$ with different $\epsilon$.}
        \label{fig_SNR_R}}
	\end{minipage}
        \hspace{-0.7em}
        \begin{minipage}{0.34\linewidth}
		\centering
		\includegraphics[width=1\textwidth]{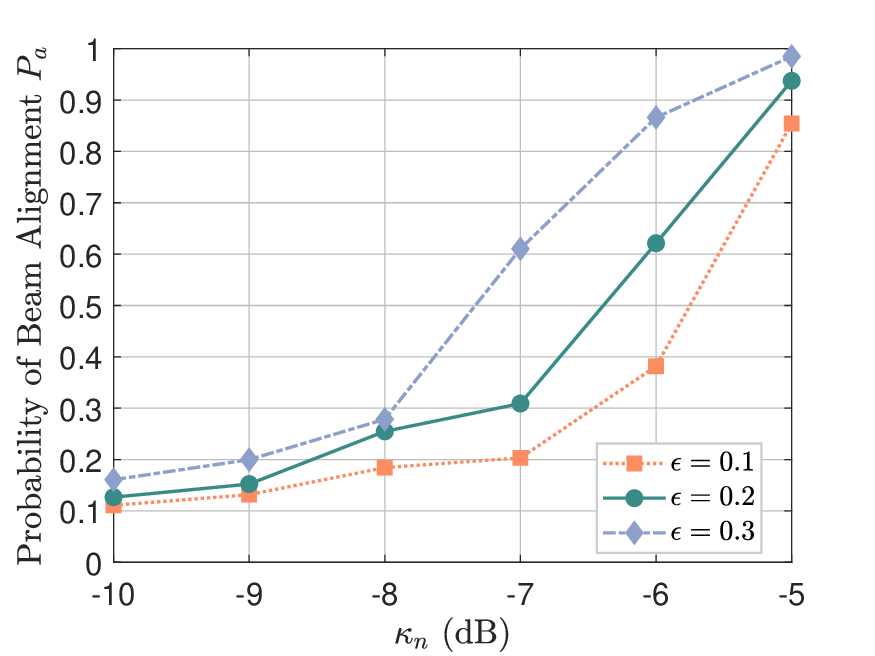}
  \setlength{\abovecaptionskip}{-0.5em}
		\caption{$P_{\rm{a}}$ vs. $\kappa_n$ with different $\epsilon$.}
		\label{fig_SNR_Pa}
	\end{minipage}
        \hspace{-1.4em}
        \begin{minipage}{0.34\linewidth}
		\centering
		\includegraphics[width=1\textwidth]{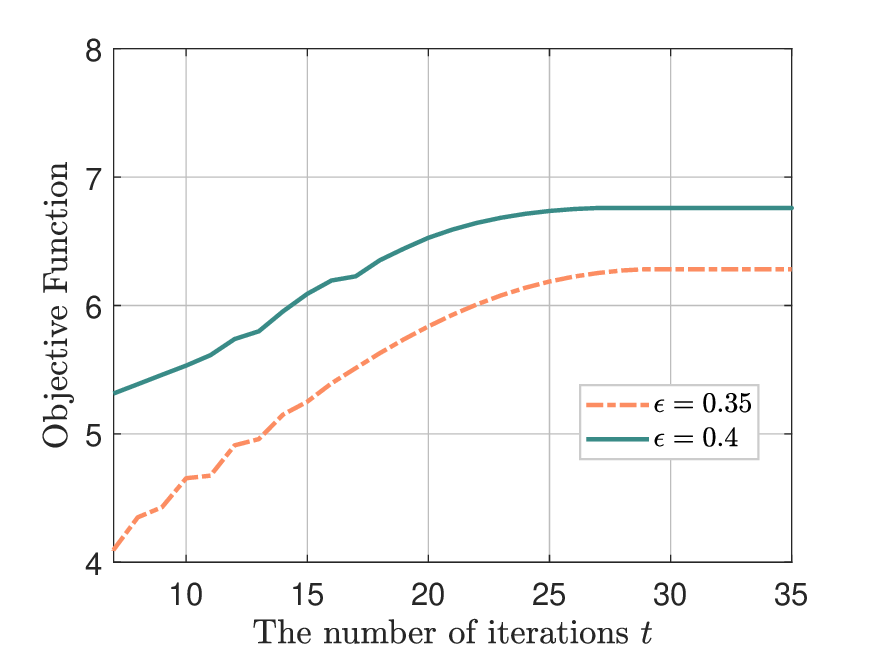}
  \setlength{\abovecaptionskip}{-0.5em}
		\caption{Illustration of the convergence of CovertAuth.}
		\label{fig_convergence}
	\end{minipage}
    \vspace{-1.2em}
\end{figure*}

In this part, we focus on evaluating the secure communication performance of the proposed CovertAuth scheme. In particular, we utilize the covertness level $\epsilon$ to characterize the secure communication demand. The secure communication performance of CovertAuth is evaluated by considering different impact parameters.

\textbf{Impact of $\epsilon$ on Covert Transmission Performance.} In Fig.~\ref{fig_epsilon_parameters}, we illustrate the impact of $\epsilon$ on optimized variables in terms of $P^{\star}$ and $N^{\star}$ as well as the resulting covert rate $R$ with the setting $\kappa_n=-6$ dB. As observed from Fig.~\ref{fig_epsilon_P}, a larger $\epsilon$ leads to an increased $P^{\star}$, implying that more transmitting power can be used in the BA stage as the covertness requirement is relaxed. Moreover, given a constant $\epsilon=0.2$, the optimized $P^{\star}$ under $\kappa_e=\{-5,-10,-15\}$ dB are 0.269, 0.316 and 0.499, respectively. The reason is the increased noise uncertainty level at Eve makes it difficult to successfully detect the covert transmission and thus Alice can transmit signals with a larger power $P^{\star}$. Another interesting observation in Fig.~\ref{fig_epsilon_N} is that a larger $P^\star$ allows for a reduction in beam training budget $N$. In concert, for $\kappa_e=-15$ dB, as $P^\star$ increases from 0.299 to 0.834, the optimization variable  $N^\star$ correspondingly decreases from 4 to 1. This indicates the trade-off between transmitting power and beam training budget needed to achieve the desired covertness constraint.
In Fig.~\ref{fig_epsilon_R}, we can see that the relaxed covertness requirement obviously results in a larger communication rate between Alice and Bob. For example, with $\kappa_e=-15$ dB, $R$ presents an increased tendency from 2.122 bps/Hz to 4.621 bps/Hz with $\epsilon$ in the range of $[0.1,0.3]$.

\textbf{Impact of $\kappa_n$ on Covert Transmission Performance.} 
Given the fixed $\kappa_e=-15$ dB, Fig.~\ref{fig_SNR_R} and Fig.~\ref{fig_SNR_Pa} present the effects of $\kappa_n$ on covert rate $R$ and probability of successful beam alignment $P_{\rm{a}}$ under different covertness constraints $\epsilon\in\{0.1,0.2,0.3\}$. In specific, one can note from Fig.~\ref{fig_SNR_R} that $R$ of all three curves exhibit an increasing tendency as $\kappa_n$ improves. This is because a lower noise interference in the Alice-Bob link enhances the channel transmission quality, thereby increasing the number of effective information bits that can be transmitted under the same covertness constraint. Also, with a constant $\kappa_n=-6$ dB, $R$ under three covertness constraints are $2.122$ bps/Hz, $3.415$ bps/Hz, and $4.621$ bps/Hz, respectively. We can conclude that the increased  $\epsilon$ (i.e., lower covertness requirement) can lead to higher $R$. As the demand for covertness diminishes, CovertAuth can afford to be more aggressive in data transmission, thus increasing the overall data rate. From Fig.~\ref{fig_SNR_Pa}, it is demonstrated that $P_{\rm{a}}$ can achieve over 0.85 when $\kappa_n$ is larger than $-5$ dB, ensuring the high transmission reliability of CovertAuth in low SNR conditions.
\vspace{-1.3em}

\textbf{Convergence Behavior of the CovertAuth Scheme.} To demonstrate the convergence of Algorithm \ref{Algothrim}, we plot in Fig.~\ref{fig_convergence} how the objective function in Algorithm \ref{Algothrim} varies with the number of iterations $t$ under the settings of $\kappa_n=-6$ dB, $\kappa_e=-15$ dB, $\epsilon=\{0.35, 0.4\}$. It can be seen that for a given setting of  $\epsilon$, the objective function first increases monotonically (e.g., within the first 20 iterations) and then converges to constants after a certain number of iterations (e.g., $t=25$ here).

\vspace{-1.2em}
\subsection{Identity Authentication Performance}
In this part, we evaluate the authentication performance of CovertAuth combating identity-based impersonation attacks.

\textbf{Theoretical Model Validation.} To validate the correctness of analytical expressions regarding $P_f$ in \eqref{pf_theo} and $P_d$ in \eqref{pd_theo}, we plot in Fig.~\ref{fig_auth_validate_PDF} - Fig.~\ref{fig_auth_validate_PD} the theoretical and simulated results about PDFs of $Y_l|\mathcal{H}_1$, $P_d$ and $P_f$, respectively. The relevant system parameters are set as ($\kappa_e=-15$ dB, $\epsilon=0.2$). Fig.~\ref{fig_auth_validate_PDF} presents that the theoretical PDF curves of $Y_l|\mathcal{H}_1$ are fitted well with the simulated ones under randomly selected beam signals $\{Y_{34}, Y_{35}, Y_{36}\}$. This implies the correct statistical analysis of $Y_l|\mathcal{H}_1$ and provides support for the subsequent derivation of $P_d$. It is observed from Fig.~\ref{fig_auth_validate_PF} and Fig.~\ref{fig_auth_validate_PD} that although the proposed theoretical models can efficiently characterize the authentication performance of the proposed scheme in terms of $P_f$ and $P_d$, a slight mismatch still exists between the theoretical results and corresponding simulated ones. Such a mismatch is mainly due to the approximation adopted in the theoretical modeling. Notice that to derive the closed-form expressions for the detection and false alarm probabilities, we formulate the test statistic $T$ as a weighted sum of noncentral Chi-square variables. However, it is difficult (if not impossible) to obtain the precise statistical information of $T$, so we apply Lemma \ref{lemma_pdf} to approximate $T$ as a noncentral Chi-square variable. While this approximation leads to a deviation between the theoretical and simulated results, we can observe that as SNR increases from $-6$ dB to 2 dB, the gaps between the theoretical and simulation results of $P_f$ and $P_d$ tend to decrease. Thus, the theoretical models in \eqref{pf_theo} and \eqref{pd_theo} can provide an efficient approximation to the simulated results, especially in scenarios with low noise interference.

\textbf{Performance Comparison with Different Weight Allocation Schemes.}
To exhibit the performance superiority of CovertAuth using the optimal weight allocation strategy obtained in problem \eqref{opt}, we compare it with the one exploiting the averaged weight design scheme and summarize the results in Fig.~\ref{fig_auth_weight} with the receiver operating characteristic (ROC) curve. Specifically, we fix the $P_f$ between 0.05 and 0.5 with a step size of 0.05 and calculate the corresponding $P_d$ under the settings of ($\kappa_e=-15$ dB, $\epsilon=0.2$, $\kappa_n=\{-6,-4\}$ dB). It is noted that for a fixed $P_f=0.1$, the scheme with optimal weight value can achieve 24.7\% detection performance improvement over that with average weight value under $\kappa_n=-6$ dB. This indicates that the solved optimal weight enables CovertAuth to obtain a larger detection accuracy gain and to possess more reliability in countering impersonation attacks.

\textbf{Impact of $\epsilon$ on Authentication Performance.} To investigate the trade-off performance between covert transmission and identity authentication, we examine the impact of $\epsilon=\{0.1,0.2,0.3\}$ on $P_d$ and $P_f$ in Fig.~\ref{fig_auth_epsilon} with the settings of ($\kappa_n=-6$ dB, $\kappa_e=-15$ dB). One can see that a higher $\epsilon$ results in a lower miss detection probability $P_m=1-P_d$ and a lower $P_f$. However, the increase in authentication performance often comes with the sacrifice of performance in combating eavesdropping attacks, thus a balance between covert transmission and identity authentication should be achieved based on the current security requirements of the mmWave system. Based on the observed results, we can note that CovertAuth enables effective defense against identity impersonation attacks while preserving covert communication. According to different security level requirements, CovertAuth can dynamically adjust $\epsilon$ to balance the trade-off between covertness and authentication performance. This dynamic flexibility makes CovertAuth well-suited for privacy-sensitive environments, military communications, and secure IoT networks, where both robust identity verification and undetectable transmission are essential.

\textbf{Impact of $\kappa_e$ on Authentication Performance.} To explore the performance of CovertAuth under different attack levels, Fig.~\ref{fig_auth_kappa_w} illustrates the effect of $\kappa_e$ on $P_m$ and $P_f$ with the settings of ($\kappa_n=-6$ dB, $\epsilon=0.2$). One can note that the metric of $P_m$ and $P_f$ both present an increasing tendency as $\kappa_e$ improves. The reason is that as the noise level at Eve decreases, Eve gains an improved rate to accurately analyze and interpret the transmitted signals, making it easier to mimic legitimate users during the authentication phase. This results in a degraded authentication performance. Consequently, if the priority of identity authentication is high, it may be necessary to appropriately compromise the secrecy of information transmission to improve authentication accuracy.

\textbf{Robustness Illustration of CovertAuth.} To further examine the reliability of CovertAuth, we consider a worst-case scenario, where the attacker Eve is located along the transmission path between Alice and Bob. Under the considered scenario, Eve possesses the same AoA-AoD pair as that in the Alice-Bob link. In addition, we further assume that the channel gains $\alpha_0$ and $\alpha_1$ maintain the same value during the BA stage. Then, we investigate the performance of the proposed scheme under different $\epsilon$ with ($\kappa_n =-$6 dB and $\kappa_e=-$15 dB). The covert transmission performance is summarized in Table \ref{covert_table} and the corresponding authentication performance is plotted in Fig.~\ref{fig_auth_worst}. Although a low $\epsilon$ may constrain the authentication performance, CovertAuth can demonstrate a remarkable enhancement in authentication accuracy when the covertness constraints are moderately relaxed. In specific, under a given $P_f=0.1\%$, $P_m$ under $\epsilon=\{0.1,0.2,0.3\}$ are 58.3\%, 7.2\%, and 0.03\%, respectively. This suggests that CovertAuth can still achieve a relatively satisfactory performance even under the worst-case scenario.
\begin{figure*}[!t]
\vspace{-2em}
    \subfigure[]
    {\includegraphics[width=0.34\linewidth]{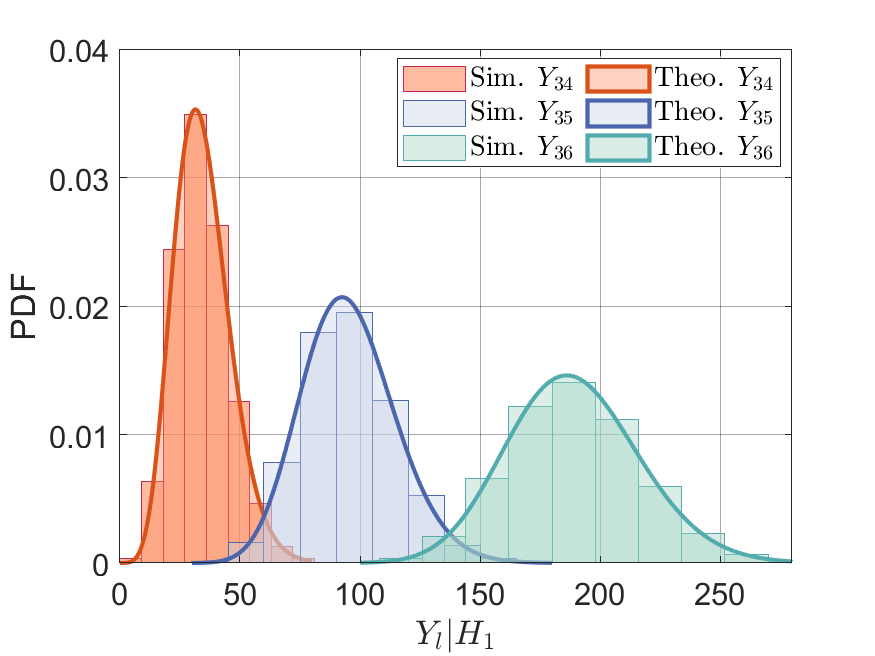}
    \label{fig_auth_validate_PDF}}
    \hspace{-1.5em}
    \subfigure[]
    {\includegraphics[width=0.34\linewidth]{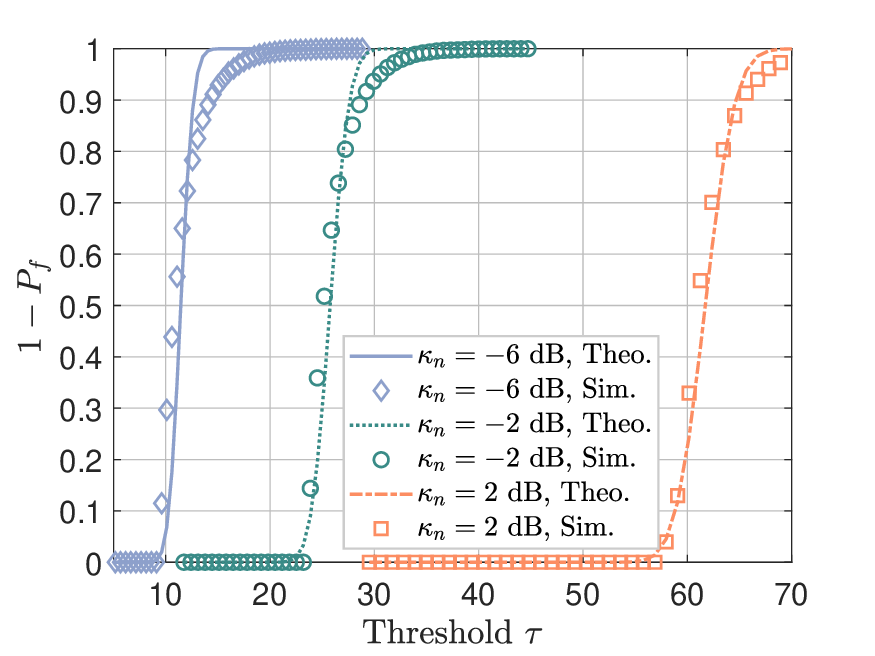}
    \label{fig_auth_validate_PF}}
    \hspace{-1.5em}
    \subfigure[]
    {\includegraphics[width=0.34\linewidth]{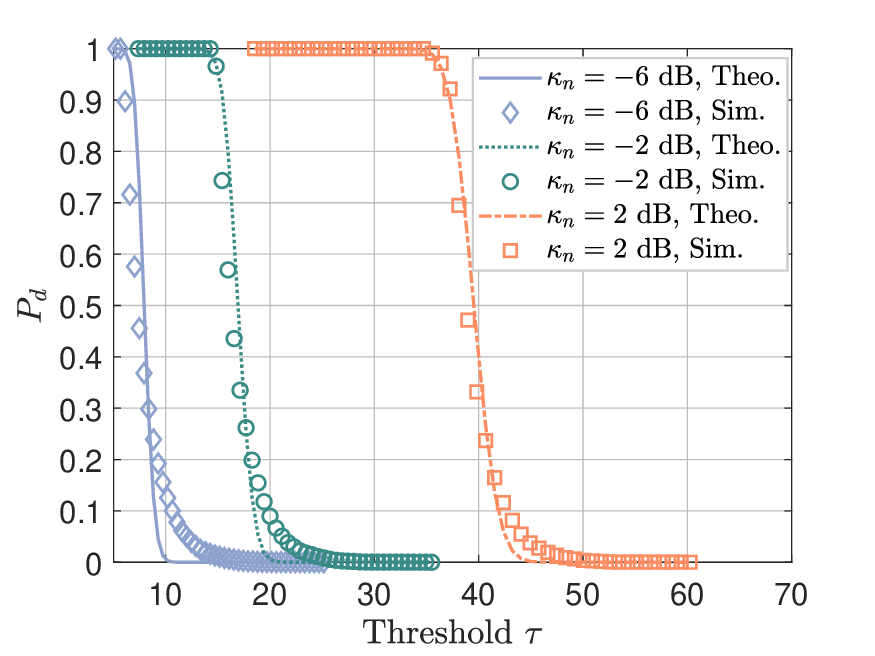}
    \label{fig_auth_validate_PD}}
    \setlength{\abovecaptionskip}{0em}
    \caption{Comparison between theoretical and simulation results of (a) PDF of $Y_l|\mathcal{H}_1$; (b) $P_f$ versus $\tau$; (c) $P_d$ versus $\tau$.}
    \label{fig_auth_validate}
    \vspace{-1.5em}
\end{figure*}

\begin{figure*}[!t]
\hspace{-3mm}
	\begin{minipage}{0.242\linewidth}
 \vspace{0em}
		\centering
        \includegraphics[width=1.12\textwidth]{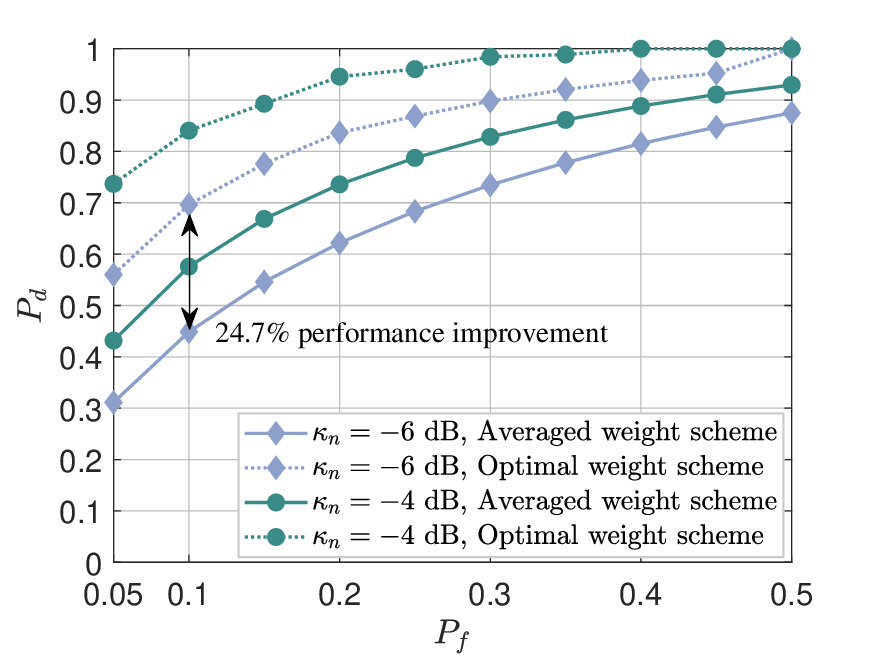}
        \setlength{\abovecaptionskip}{-1.5em}
        \caption{ROC curves under different weight design schemes.}
        \label{fig_auth_weight}
	\end{minipage}
 \hspace{-0.06em}
        \centering
        \begin{minipage}{0.242\linewidth}
		\centering
        \setlength{\abovecaptionskip}{-0.5em}
		\includegraphics[width=1.14\linewidth]{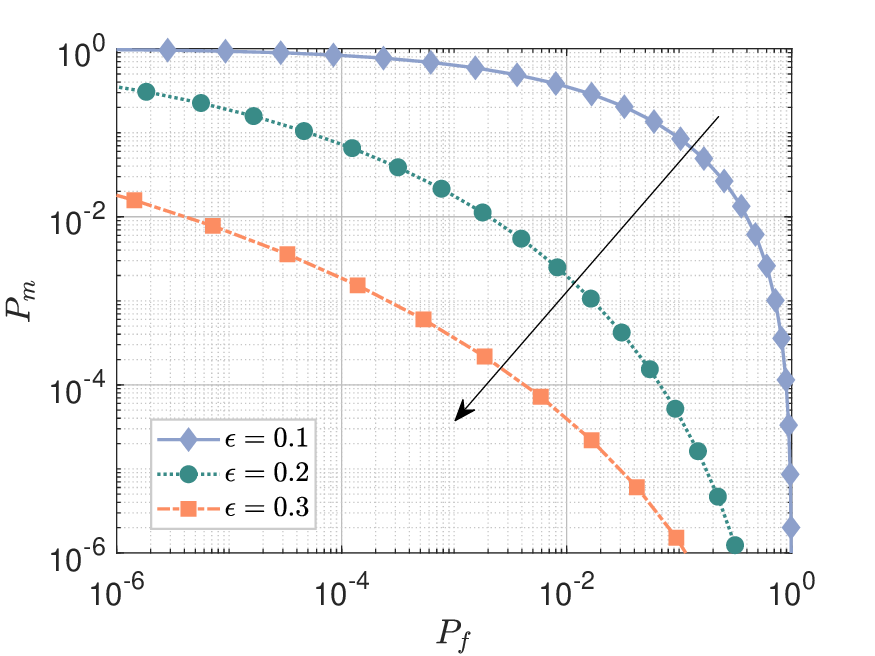}
		\caption{Impact of $\epsilon$ on authentication performance.}
		\label{fig_auth_epsilon}
	\end{minipage}
  \hspace{0.18em}
	\begin{minipage}{0.242\linewidth}
 \vspace{0.5em}
		\centering
          \setlength{\abovecaptionskip}{-0.6em}
		\includegraphics[width=1.115\linewidth]{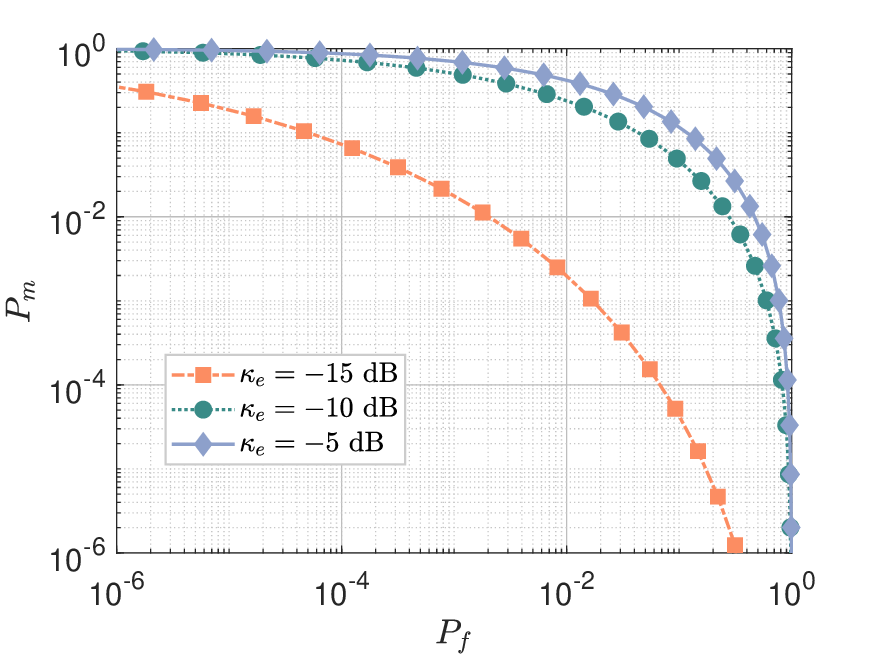}
		\caption{Impact of $\kappa_e$ on authentication performance.}
		\label{fig_auth_kappa_w}
	\end{minipage}	
  \hspace{-0.1em}
         \centering
        \begin{minipage}{0.243\linewidth}
		\centering
          \setlength{\abovecaptionskip}{-0.6em}
		\includegraphics[width=1.12\linewidth]{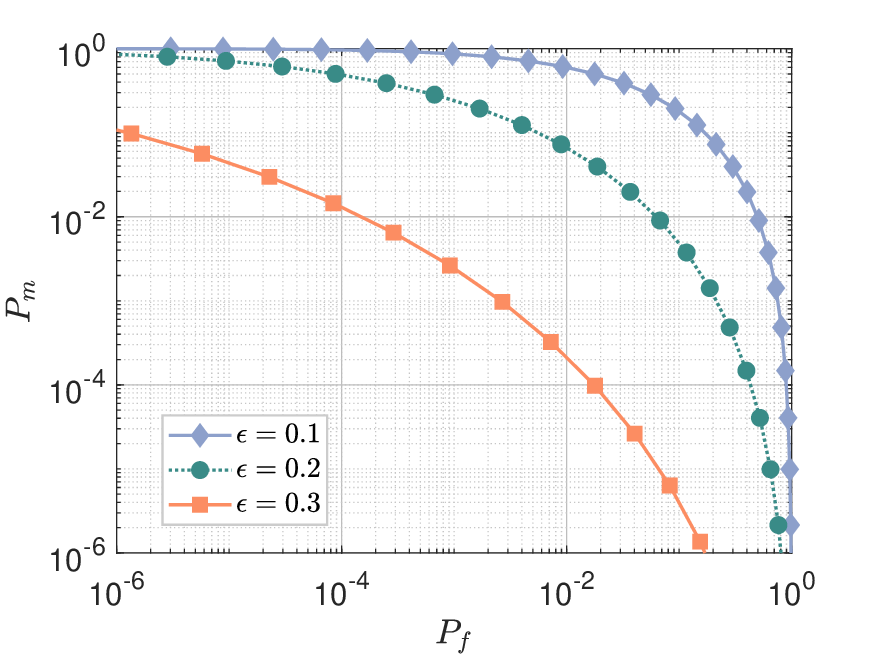}
		\caption{Authentication performance under worst-case scenario.}
		\label{fig_auth_worst}
	\end{minipage}
 \vspace{-1.5em}
\end{figure*}
\begin{figure}
    \subfigure[]
    {\includegraphics[width=0.92\linewidth]{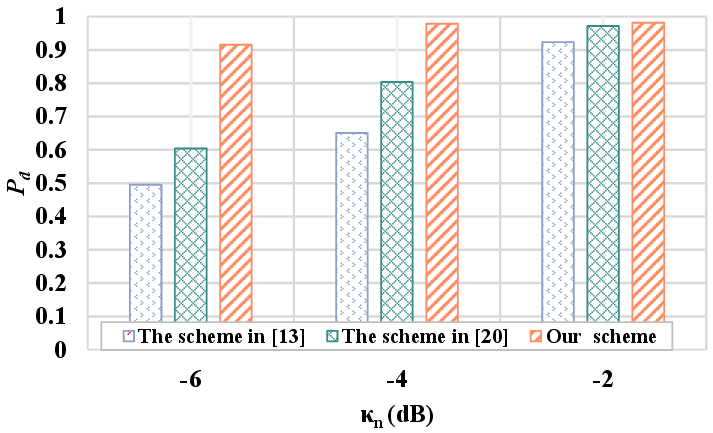}
    \label{fig_auth_comparison_kappa}}
    \vspace{-1em}
    \\
    \subfigure[]
    {\includegraphics[width=0.92\linewidth]{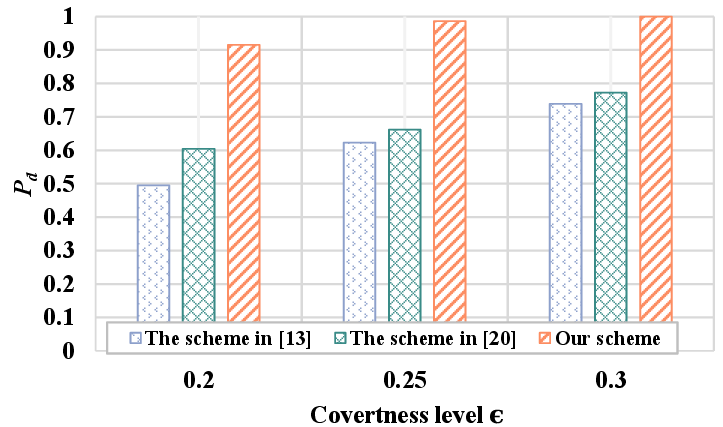}
    \label{fig_auth_comparison_epsilon}}
    \setlength{\abovecaptionskip}{-0em}
    \caption{Authentication performance comparison with existing works under the impacts of (a) $\kappa_n$; (b) Covertness level $\epsilon$.}
    \vspace{-1.5em}
\end{figure}
\begin{table}[htbp]
\vspace{-1.2em}
\renewcommand{\arraystretch}{1.3} 
\centering
\caption{Covert Communication Performance under the Worst-case Scenario}
\begin{tabular}{cccc}
\hline Covertness level $\epsilon$ & 0.1 & 0.2 & 0.3 \\
\hline $P^\star$ &0.2758 & 0.3659   &  0.5288 \\
$N^\star$ &5  & 3   &  2 \\
$R$ (bps/Hz)& 1.9125 & 2.3611 &3.6276 \\
\hline
\end{tabular}
\label{covert_table}
\end{table}

\textbf{Performance Comparison with Existing Works.} In Fig.~\ref{fig_auth_comparison_kappa} and Fig.~\ref{fig_auth_comparison_epsilon}, we present a comparative analysis with existing works of channel phase response-based scheme in \cite{DBLP:journals/tifs/LuLSL23} and multiple channel responses-based scheme in \cite{DBLP:journals/ton/LiZCCXL24} under the impacts of $\kappa_n$ and $\epsilon$, respectively. From Fig.~\ref{fig_auth_comparison_kappa}, one can note that all three methods present an increased $P_d$ due to the decreased noise interference at Bob. 
Although the proposed CovertAuth is comparable to the method in \cite{DBLP:journals/ton/LiZCCXL24} under high SNR conditions (i.e., $\kappa_n=-2$ dB), our scheme exhibits the best detection performance in low SNR regime (e.g., $\kappa_n=\{-6,-4\}$ dB). This indicates that the proposed scheme not only delivers superior detection capabilities but also exhibits robustness against noise interference. Specifically, given a fixed $\kappa_n=-4$ dB, CovertAuth can provide 32\% and 17\% performance improvement compared to the methods in \cite{DBLP:journals/tifs/LuLSL23} and \cite{DBLP:journals/ton/LiZCCXL24}, respectively. 

Fig.~\ref{fig_auth_comparison_epsilon} illustrates the authentication performance considering the covertness constraint $\epsilon=\{0.2,0.25,0.3\}$. We can intuitively see that the performance of the three schemes gradually improves with the relaxation of covertness constraints, and CovertAuth exhibits the best performance under all conditions. This is because under a low $\epsilon$, the allocated transmission power $P^{\star}$ is limited, leading to poor feature extraction precision and degraded authentication performance of existing works. In contrast, CovertAuth exploits the fine-grained beam pattern feature caused by both the antenna hardware fingerprint MC effect and the spatial information of the transmitter to devise an energy detector-based authentication scheme without the need for feature extraction. Such a scheme effectively meets the security requirements in the mmWave BA stage. 

\section{Discussions}\label{Discussions}
Notice that the above analysis is based on the ideal beam pattern without any side-lobe leakage. In this section, we further investigate the performance of the proposed CovertAuth scheme under a more practical beam pattern model with side-lobe leakage and various number of antennas at transmitting/receiving nodes.
\vspace{-1.4em}
\subsection{New Beam Pattern Model with Side-lobe Leakage}
We consider a new beam pattern model where the training beams exhibit uniform gain within the main-lobe and a constant small leakage in the side-lobe \cite{DBLP:journals/tifs/ZhangLZJX22}, \cite{DBLP:journals/twc/LiLHCW19}. Then, with this assumption, the beamforming gains of the main-lobe and side-lobe at the transceiver sides can be quantified as
\begin{align}
W_l(\phi) & = \begin{cases}W_T , & \text { if } \phi \in \Phi_{\mathbf{w}_l}, \\
w_T, & \text { otherwise},\end{cases} \\
F_l(\theta) & = \begin{cases}F_R, & \text { if } \theta \in \Theta_{\mathbf{f}_l}, \\
f_R, & \text { otherwise},\end{cases}
\end{align}
where $w_T$ and $f_R$ denote the beamforming gain when the beam misalignment occurs at the transmitter or receiver side, respectively. In this scenario, four possible beam alignment events may happen (i.e., perfect alignment at both the transmitter and receiver, misalignment at transmitter only, misalignment at receiver only, and misalignment at both ends), and the corresponding effective channel gain of the beam pattern $g_l$ will be  $|\alpha|^2F_RW_T,|\alpha|^2F_Rw_T,|\alpha|^2f_RW_T,$ and $ |\alpha|^2f_Rw_T$, respectively.
\vspace{-1.6em}
\subsection{Performance Analysis under New Beam Pattern Model}\label{covert_phase_practical}
\subsubsection{Covert Communication Performance} We first derive the new expression for the lower bound of successful beam alignment probability under the new beam pattern model. If we use
$P_{\rm{ma,1}}$, $P_{\rm{ma,2}}$, and $P_{\rm{ma,3}}$ to denote the probabilities of misalignment under the  three misalignment events, then the lower bound $P_{\rm{a,LB}}$ of $P_{\rm{a}}$ can be written as
\begin{align}\label{P_a_LB}
    P_{\rm{a,LB}}=1-P_{\rm{ma,1}}-P_{\rm{ma,2}}-P_{\rm{ma,3}},
\end{align}
where
\begin{align}
P_{\rm{ma},1} &= 1 - \int_0^{\infty} \left( F(y \mid 2, \lambda_2) \right)^{L_T - 1} f(y \mid 2, \lambda_1) \, {\rm{d}}y,  \\
P_{\rm{ma},2} &= 1 - \int_0^{\infty} \left( F(y \mid 2, \lambda_3) \right)^{L_R - 1} f(y \mid 2, \lambda_1) \, {\rm{d}}y, \\
P_{\rm{ma},3} &= 1 - \int_0^{\infty} \left( F(y \mid 2, \lambda_4) \right)^{(L_T - 1)(L_R - 1)} f(y \mid 2, \lambda_1) \, {\rm{d}}y.
\end{align}
$\lambda_2=\frac{2|\alpha|^2NPF_Rw_T}{\sigma^2_n}$, $\lambda_3=\frac{2|\alpha|^2NPf_RW_T}{\sigma^2_n}$, and $\lambda_4=\frac{2|\alpha|^2NPf_Rw_T}{\sigma^2_n}$ are the non-centrality parameters
related to the side-lobe effect. Moreover, $F(y \mid 2, \lambda)$ and $f(y \mid 2, \lambda)$ represent the cumulative distribution function (CDF) and probability density function (PDF) of a non-central Chi-square variable with non-centrality parameter $\lambda$ and degrees of freedom 2. Readers may refer to \cite{DBLP:journals/tifs/ZhangLYLCZW21}  for the detailed derivation of $P_{\rm{a,LB}}$, which is omitted here. Due to the fact that the beamforming gain of the main-lobe is generally much larger than that of the side-lobe, we can approximate the average effective covert rate $R$ as follows by ignoring the marginal contribution from misalignment:
\begin{align}\label{covert_rate_practical}
    R&=\left(1-\frac{N}{N_{\rm{total}}}\right)\mathbb{E}_{g_{\hat{l}}}\left\{\log\left(1+\frac{NPg_{\hat{l}}}{\sigma^2_n}\right)\right\}.\nonumber\\
    &\approx\left(1-\frac{N}{N_{\rm{total}}}\right)P_{\rm{a,LB}}\log\left(1+\frac{|\alpha|^2NPF_RW_T}{\sigma^2_n}\right).
\end{align}
Under the new beam pattern model, the related optimization problem for covert rate maximization can be formulated as
\begin{subequations}\label{problem_opt_practical}
\begin{align}
\max _{P, N}& \ \left(1-\frac{N}{N_{\rm{total}}}\right)P_{\rm{a,LB}}\log\left(1+\frac{|\alpha|^2NPF_RW_T}{\sigma^2_n}\right),  \\ 
\text { s.t. }  &0<P \leq P_{\text {max }}, \\
& 1 \leq N \leq  N_{\text {max }}, \\
& \eqref{covert_constraint}, \eqref{imperfect_CSI_constraint}.\nonumber
\end{align}
\end{subequations}
We can see that the key distinction between this optimization problem \eqref{problem_opt_practical} and problem in \eqref{problem_opt} is the incorporation of side-lobe beamforming gain in the successful beam alignment probability $P_{\rm{a,LB}}$. Consequently, we can still apply Algorithm \ref{Algothrim} to solve the problem \eqref{problem_opt_practical} by replacing $P_{\rm{a}}$ in \eqref{P_a_P} and \eqref{P_a_N} with $P_{\rm{a,LB}}$ in \eqref{P_a_LB}.
\subsubsection{Authentication Performance}
Note that the identity decision of the energy-based authentication mechanism in \eqref{decision} is actually determined by the received beam pattern energy rather than the quantified beamforming gain. In other words, the effects of side-lobe have already been incorporated into our design of the authentication mechanism. Therefore, under the new beam pattern model, the theoretical modeling and optimization framework of the authentication mechanism in Section \ref{authentication_phase} still work. However, with the new beam pattern model, the optimal parameter settings for covert communication (e.g., $N$ and $P$) will differ, leading to a different received signal energy and thus a different authentication performance. Such effects will be further illustrated in Section \ref{simulation_paractical}.

\subsection{Numerical Results under the New Beam Pattern Model}\label{simulation_paractical}
To demonstrate the effectiveness of CovertAuth under the new beam pattern model, we first quantify the beamforming gain of side-lobe leakage.  Specifically, we follow \cite{DBLP:journals/tifs/ZhangLYLCZW21} to calculate the beamforming gains of the side-lobe at the transceiver sides as
\begin{align}
    w_T &=\left(2 - ( 2/L_T \cdot W_T )\right)/(2 - 2/L_T), \\
f_R& = \left(2 - ( 2/L_R \cdot W_R )\right)/(2 - 2/L_R).
\end{align}

\textbf{Covert Transmission Performance.} 
To explore the effect of side-lobe on covert communication performance, we show in Fig.~\ref{fig_opt_R_practical} how $R$ varies with $\kappa_n$ under the ideal scenario without side-lobe leakage and practical scenario with side-lobe leakage, where $\kappa_e=-15$ dB and $\epsilon=\{0.1, 0.3\}$. It is observed that for a given $\kappa_n$ and covertness constraint $\epsilon$, $R$ under the ideal scenario is always larger than that under the practical scenario. The reason is that the side-lobe leakage degrades the precision of beam alignment and thus leads to a reduction in covert rate $R$. We can also see from Fig.~\ref{fig_opt_R_practical} that even with the consideration of side-lobe leakage, the covert rate can reach over 6 bps/Hz when SNR is larger than 0 dB, so our scheme can still satisfy both the communication quality and covertness requirements even in a practical scenario.

To further illustrate the covert communication performance gap between the ideal and practical scenarios, we show in Fig.~\ref{fig_opt_R_antenna} how covert rate varies with $\kappa_n$ under the settings of $\kappa_e=-15$ dB, $\epsilon=0.1$, ($N_t=16, N_r=8$), ($N_t=32, N_r=16$), and ($N_t=64, N_r=32$). As depicted in Fig.~\ref{fig_opt_R_antenna}, employing more antennas will yield a higher $R$. The reason is that employing a larger antenna array helps to enhance the beamforming capability and enables the construction of more precise and directive beams with better-controlled side-lobes, leading to an improved covert rate. Moreover, we notice that as more antennas are deployed, the performance gap between ideal and practical scenarios tend to decrease, indicating that the proposed scheme is more appealing for applications with a large-scale antenna array deployment. However, deploying more antennas will invariably lead to a higher hardware cost and a higher computational complexity for beam management, so a careful trade-off between the overhead and performance gain should be initialized for different applications.
\begin{figure*}[!t]
\vspace{-2em}
\hspace{-3mm}
	\begin{minipage}{0.242\linewidth}
 \vspace{0em}
		\centering
        \includegraphics[width=1.12\textwidth]{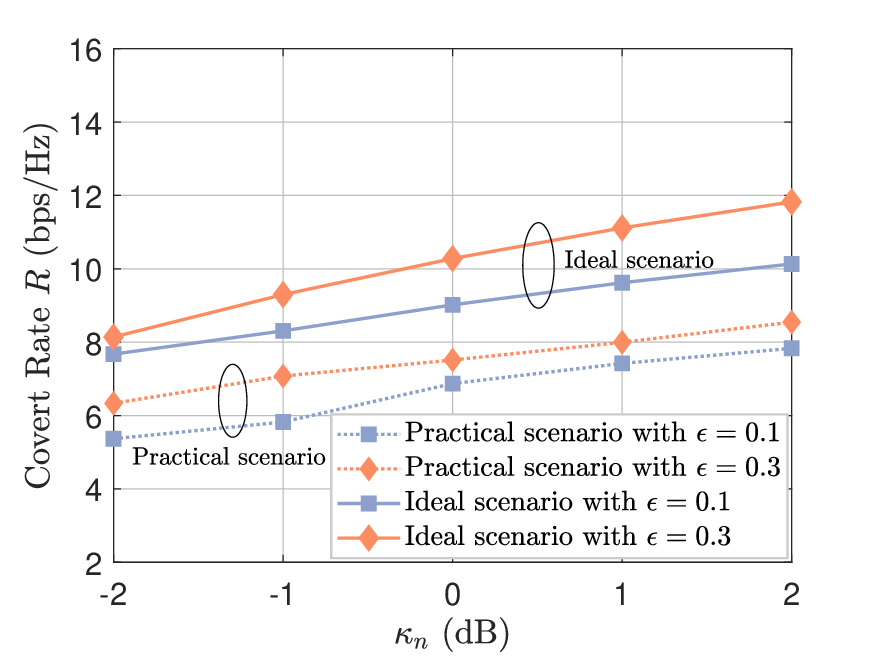}
        \setlength{\abovecaptionskip}{-1.5em}
        \caption{$R$ vs. $\kappa_n$ under different covertness constraints.}
        \label{fig_opt_R_practical}
	\end{minipage}
 \hspace{-0.06em}
        \centering
        \begin{minipage}{0.242\linewidth}
		\centering
        \setlength{\abovecaptionskip}{-0.5em}
		\includegraphics[width=1.14\linewidth]{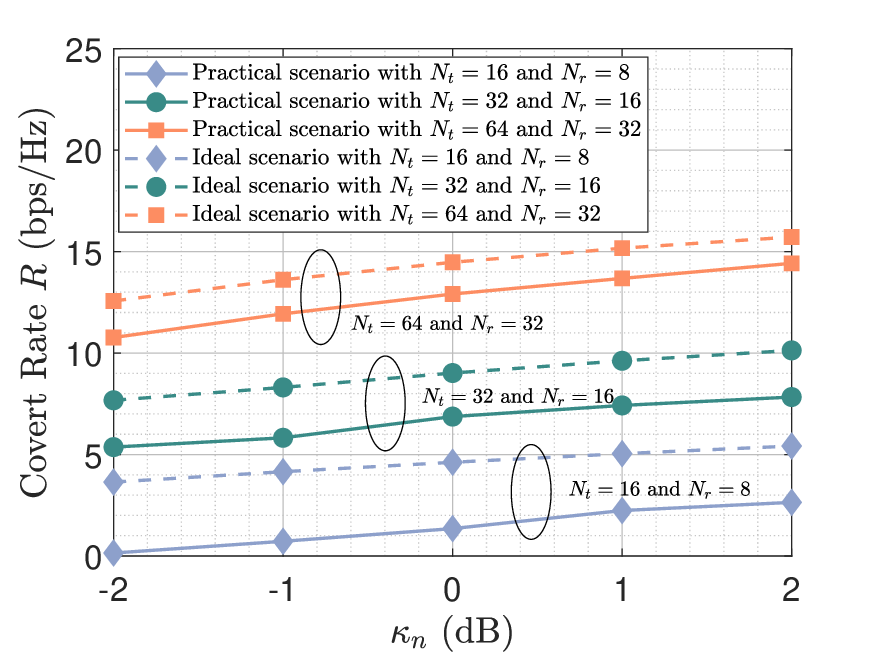}
		\caption{$R$ vs. $\kappa_n$ under different antenna array configurations.}
		\label{fig_opt_R_antenna}
	\end{minipage}
  \hspace{0.18em}
	\begin{minipage}{0.242\linewidth}
 \vspace{0.5em}
		\centering
          \setlength{\abovecaptionskip}{-0.6em}
		\includegraphics[width=1.12\linewidth]{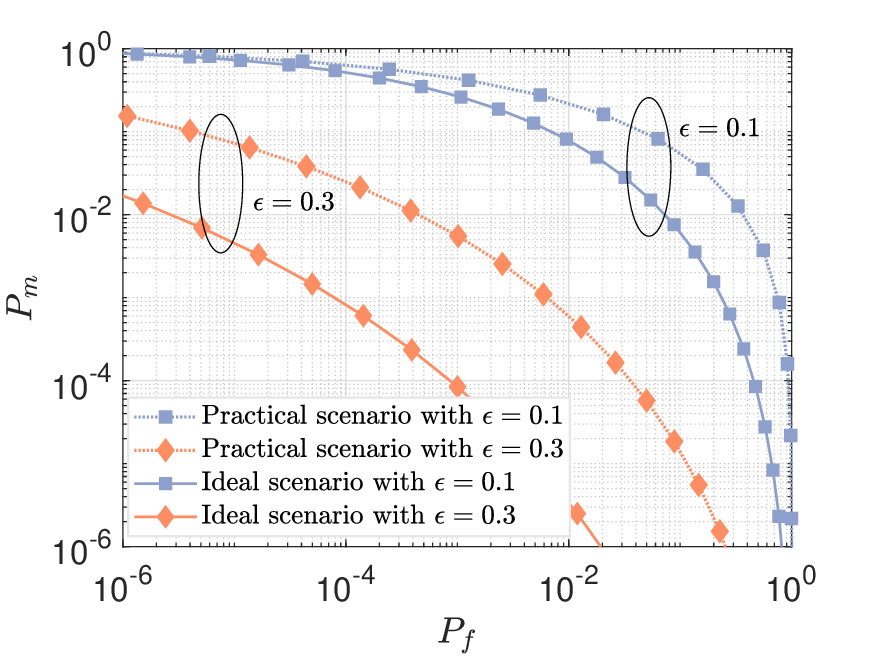}
		\caption{$P_m$ vs. $P_f$ under different covertness constraints.}
		\label{fig_auth_epsilon_practical}
	\end{minipage}	
  \hspace{-0.1em}
         \centering
        \begin{minipage}{0.243\linewidth}
		\centering
          \setlength{\abovecaptionskip}{-0.6em}
		\includegraphics[width=1.14\linewidth]{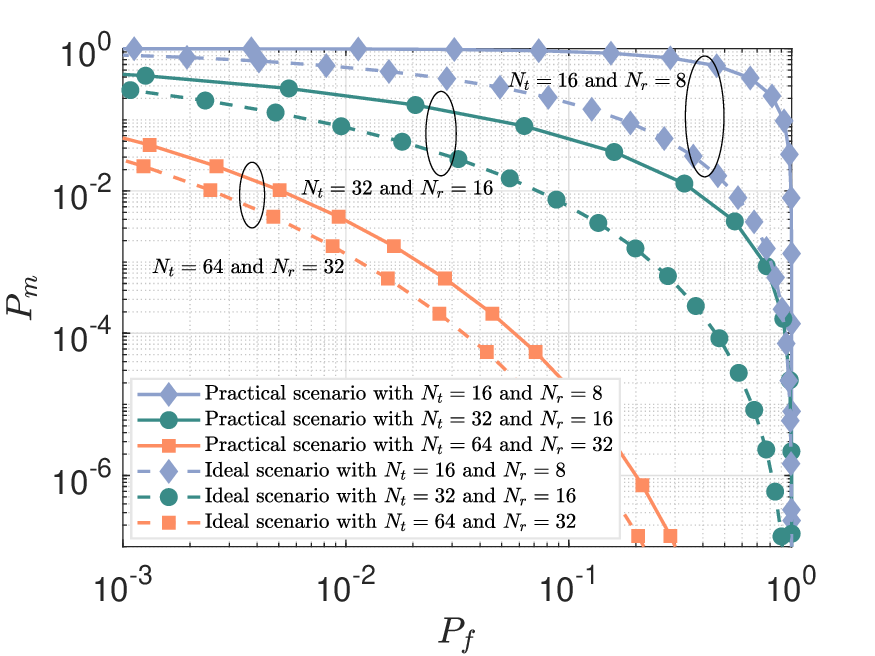}
		\caption{$P_m$ vs. $P_f$ under different antenna array configurations.}
		\label{fig_auth_antenna_practical}
	\end{minipage}
 \vspace{-1em}
\end{figure*}

\textbf{Identity Authentication Performance.}
To explore the effects of side-lobe on authentication performance, we illustrate in Fig.~\ref{fig_auth_epsilon_practical} $P_m$ vs. $P_f$ under the ideal scenario without side-lobe leakage and practical scenario with side-lobe leakage, where $\kappa_n=-2$ dB, $\kappa_e=-15$ dB, $\epsilon=\{0.1,0.3\}$. It can be seen from Fig.~\ref{fig_auth_epsilon_practical} that for a given $P_f$, the $P_m$ performance under the practical scenario is larger than that under the ideal scenario. This is because the unexpected side-lobe results in a decrease in signal transmission power and a distortion of the MC feature, resulting in a degraded authentication capability at the energy detector. The results in Fig.~\ref{fig_auth_epsilon_practical} also indicate that under a practical scenario, a slight compromise on the covertness requirement can lead to a significant improvement in authentication performance.

To further explore the authentication performance gap between the ideal and practical scenarios, we present in Fig.~\ref{fig_auth_antenna_practical} how the antenna array configuration influences $P_m$ and $P_f$ under the settings of $\kappa_n=-2$ dB, $\kappa_e=-15$ dB, $\epsilon=0.1$. As observed from Fig.~\ref{fig_auth_antenna_practical} that as the number of antennas increases, the authentication performance under the practical scenario approaches that under the ideal scenario. The reason behind such observation is similar to that of Fig.~\ref{fig_opt_R_antenna}, i.e., the enhanced directive beamforming enables an increase in the effective signal power towards the receiver, improving the authentication capability. We can also see from Fig.~\ref{fig_auth_antenna_practical} that under the settings of $(N_t=32,N_r=16)$, $P_m$ of the practical scenario achieves 0.054 when $P_f=0.1$, indicating that our scheme with such configuration is capable of meeting the authentication requirement of mmWave BA stage even under the effect of side-lobe.

\section{Related Work}\label{related_work}
Some preliminary efforts have been devoted to the secure mechanism design for the mmWave BA stage, which can be categorized into approaches addressing jamming attacks \cite{DBLP:journals/tcom/DarsenaV22}, \cite{DBLP:journals/tifs/DinhVanHCFMH23}, eavesdropping attacks \cite{DBLP:journals/tifs/ZhangLYLCZW21}, \cite{DBLP:journals/icl/XingQCWLL23}, \cite{DBLP:journals/tce/JuZBLPWO23}, and impersonation attacks \cite{DBLP:conf/mobicom/SteinmetzerAAH018}, \cite{10735492}.

\textbf{Countermeasures for Jamming Attacks.} To mitigate the BA misalignment from jamming attacks, a countermeasure exploiting randomized probing technique has been proposed in \cite{DBLP:journals/tcom/DarsenaV22}, where the base station corrupts the probing sequence randomly to enable jamming rejection at the user equipment using subspace-based techniques, such as orthogonal projection and jamming cancellation. Then, the authors in \cite{DBLP:journals/tifs/DinhVanHCFMH23} use an autoencoder-based approach for jamming detection and mitigation, enabling recovery of the corrupted received signal strength vector during the BA stage.

\textbf{Countermeasures for Eavesdropping Attacks.} As to the countermeasure for eavesdropping attacks, Zhang \textit{et. al} introduce a covert beam training strategy to minimize detection probability while enabling simultaneous beam training for a single user \cite{DBLP:journals/tifs/ZhangLYLCZW21}.
Moreover, a two-stage power optimization scheme is developed in \cite{DBLP:journals/icl/XingQCWLL23} to maximize the average covert rate for the BA and data transmission phase while achieving the covertness requirement to avoid eavesdropping threat. With the random beam switching aid,  Ju \textit{et. al} exploit the virtual angles of effective and activated beams as the random source to generate the secret key for information security between the transceiver pair \cite{DBLP:journals/tce/JuZBLPWO23}. 

This paper proposes a CovertAuth scheme to integrate a signal optimization-based covert communication mechanism with an adaptive weight-based authentication mechanism to simultaneously combat both eavesdropping and impersonation attacks. While the covert communication mechanism in this work is highly related to that of \cite{DBLP:journals/tifs/ZhangLYLCZW21}, some primary distinctions exist. First, notice that the hardware imperfections (e.g., mutual coupling effects) in antenna arrays can lead to beam pattern distortion and beam squint, we incorporate such mutual coupling effects into the beamforming design to achieve a more efficient covert transmission. Second, due to channel estimation errors and the non-cooperative relationship between Alice and Eve, this work considers a more practical imperfect CSI model in the overall covert signal optimization to achieve a more robust covert communication.

\textbf{Countermeasures for Impersonation Attacks.} To counter forged feedback from malicious attackers in the BA stage, the authors in \cite{DBLP:conf/mobicom/SteinmetzerAAH018} establish a session secret through asymmetric key exchange, and then appends a cryptographic nonce to the beam feedback, thus ensuring devices accept feedback only from authorized devices. Following this line, Li \textit{et. al} propose a novel secure beam sweeping protocol, named SecBeam, which leverages power/sector randomization techniques along with coarse angle-of-arrival information to effectively detect beam-stealing attacks during the BA phase \cite{10735492}. 

\section{Conclusion}\label{conclusion}
This paper proposed an innovative PLS framework named CovertAuth to effectively address eavesdropping and identity-based impersonation attacks for the BA stage in mmWave communication systems. CovertAuth developed a covert communication optimization framework to jointly design the beam training budget and transmission power for maximizing the covert communication rate while satisfying the covertness requirement during the BA phase. Moreover, it exploited the beam pattern feature impacted by the MC effects to achieve identity validation. Simulation results indicate: 1) incorporating the MC effect into the beam pattern enhances the authentication reliability of CovertAuth; 2) the derived theoretical models provide a valuable framework for authentication performance characterization and optimization; and 3) a trade-off between authentication and covert communication performance is observed under different security requirements. It is anticipated that CovertAuth can provide insightful guidelines for the design of a secure BA framework in mmWave communication systems.
\vspace{-1.2em}
\appendices
\section{Proof of Theorem \ref{theorem_2}}\label{appendix_B}
Under $\mathcal{H}_0$, let $Y_l=\frac{2|y_l|^2}{N^\star\sigma^2_n}$ and we can easily observe that $Y_l|\mathcal{H}_0$ is distributed to a noncentral Chi-square distribution with DoF 2 and noncentral parameter $\lambda_{l,0}=\frac{2N^\star P^\star|\mathbf{f}^{H}_l\mathbf{H}_0\mathbf{w}_l|^2}{\sigma^2_n}$, where $\mathbf{H}_0=\alpha_0\Tilde{\mathbf{a}}_{r,0}(\theta_0)\Tilde{\mathbf{a}}_{t,0}^{H}(\phi_0)$. Thus, the test statistic $T$ is a weighted sum of $L$ noncentral Chi-squared variables $Y_l$ with $\lambda_{l,0}$ and DoF 2. From Lemma \ref{lemma_pdf}, $P_f$ can be approximated by the right tail probability of a noncentral Chi-square variable $K_A$ with noncentral parameter $\lambda_{A}$ and DoF $\upsilon_A$:
     \begin{align}
         P_f&\triangleq\mathbf{Pr}(T>\tau|\mathcal{H}_0)\nonumber\\
            &\approx \mathbf{Pr}(K_A>\tau_A),
     \end{align}
     where $\tau_{A}=[(\tau-\mu_{T,A})/\sigma_{T,A}]\sqrt{2\upsilon_{A}+4\lambda_{A}}+\upsilon_{A}+\lambda_{A}$. The parameters $\mu_{T,A}$ and $\sigma_{T,A}$ are given by, respectively
     \begin{align}
        \mu_{T,A}&=\gamma_{A,1},\\
        \sigma_{T,A}&=\sqrt{2\gamma_{A,2}},
    \end{align}
     where the $k$-th cumulant of $T$ is calculated as $\gamma_{A,k}=2\sum_{l=1}^L(\omega_l)^k+k\sum_{l=1}^{L}(\omega_l)^k\lambda_{l,0}$. In addition, the new noncentral parameter $\lambda_{A}$ and DoF $\upsilon_{A}$ are calculated by
     \begin{align}\label{cal_A}
         \lambda_{A}=s_{A,1}a_{A}^3-a_{A}^2,\ \ \ \ \upsilon_A=a_{A}^2-2\lambda_{A},
     \end{align}
     where $a_A=1/(s_{A,1}-\sqrt{(s_{A,1})^2-s_{A,2}})$ with $s_{A,1}=\gamma_{A,3}/(\gamma_{A,2})^{3/2}$ and $s_{A,2}=\gamma_{A,4}/(\gamma_{A,2})^2$. Based on the PDF of non-central Chi-square distribution, we can achieve the final result in \eqref{pf_theo}.
     
     Under alternative hypothesis $\mathcal{H}_1$, $Y_l|\mathcal{H}_1\thicksim\chi^2_2(\lambda_{l,1})$ with $\lambda_{l,1}=\frac{2N^\star P^\star|\mathbf{f}^{H}_l\mathbf{H}_1\mathbf{w}_l|^2}{\sigma^2_n}$, where $\mathbf{H}_1=\alpha_1\Tilde{\mathbf{a}}_{r,1}(\theta_1)\Tilde{\mathbf{a}}_{t,1}^{H}(\phi_1)$. Similar to the derivation of $P_f$, $P_d$ can also be approximated by a right tail probability of noncentral Chi-square variable $K_E$ with parameters $\lambda_E$ and DoF $\upsilon_E$. Following a similar manner in \eqref{cal_A}, we can calculate $\lambda_E$ as well as $\upsilon_E$, and the theoretical expression of $P_d$ can be obtained in \eqref{pd_theo}.
     
\bibliographystyle{IEEEtran}
\bibliography{CovertAuth}

\end{document}